\theoremstyle{plain}
\newtheorem{theorem}{Theorem}
\newtheorem{corollary}[theorem]{Corollary}
\newtheorem{lemma}{Lemma}
\theoremstyle{definition}
\newtheorem{definition}{Definition}
\newtheorem{exatitle}{Example}
\newenvironment{myexample}[2]%
{\begin{exatitle} \label{#2} #1 \end{exatitle}}%
{\hfill $\Box$ \\}
\newcommand{\ket}[1]{| #1 \rangle}
\newcommand{\bra}[1]{\langle #1 |}
\newcommand{\braket}[2]{\langle #1 | #2 \rangle}
\newcommand{\pket}[1]{[ #1 ]}
\newcommand{\td}{\text{d}}
\newcommand{\Tr}{\text{Tr}}
\newcommand{\eg}{\hbox{\em e.g.{}}}
\newcommand{\etc}{\hbox{\em etc.{}}}
\newcommand{\ie}{\hbox{\em i.e.{}}}
\newcommand{\rhs}{\hbox{r.h.s.{}}}
\newcommand{\Ps}{\mathbb{P}}
\newcommand{\rha}[1]{\rightharpoonaccent{#1}}
\newcommand{\ip}[2]{\langle #1|#2 \rangle}
\newcommand{\ipc}[2]{\langle #1,#2 \rangle}
\newcommand{\ipp}[2]{\left( #1,#2 \right)}
\g@addto@macro\bfseries{\boldmath}
\begin{document}

\title{Stellar Representation of Grassmannians}

\author{C.{} Chryssomalakos}
\email{chryss@nucleares.unam.mx}
\affiliation{Instituto de Ciencias Nucleares \\
	Universidad Nacional Aut\'onoma de M\'exico\\
	PO Box 70-543, 04510, CDMX, M\'exico.}

\author{E.{} Guzm\'an-Gonz\'alez}
\email{egomoshi@gmail.com}
\affiliation{Instituto de Ciencias Nucleares \\
	Universidad Nacional Aut\'onoma de M\'exico\\
	PO Box 70-543, 04510, CDMX, M\'exico.}

\author{L.{} Hanotel}
\email{hanotel@correo.nucleares.unam.mx}	
\affiliation{Instituto de Ciencias Nucleares \\
	Universidad Nacional Aut\'onoma de M\'exico\\
	PO Box 70-543, 04510, CDMX, M\'exico.}

\author{E.{} Serrano-Ens\'astiga}
\email{eduardo.serrano-ensastiga@uni-tuebingen.de}
\affiliation{Institut f\"ur Theoretische Physik \\
	Universit\"at T\"ubingen\\
	72076, T\"ubingen, Germany}

\begin{abstract}
	\noindent 
	Pure quantum spin-$s$ states can be represented by $2s$ points on the sphere, as shown by Majorana in 1932 --- the description has proven particularly useful in the study of rotational symmetries of the states, and a host of other properties, as the points rotate rigidly on the sphere when the state undergoes an $SU(2)$ transformation in Hilbert space. At the same time, the Wilzcek-Zee effect, which involves the cyclic evolution of a degenerate $k$-dimensional linear subspace of the Hilbert space, and the associated holonomy dictated by Schroedinger's equation,  have been proposed as a fault-tolerant mechanism for the implementation of logical gates, with applications in quantum computing. We show, in this paper, how to characterize such subspaces by Majorana-like sets of points on the sphere, that also rotate rigidly under $SU(2)$ transformations --- the construction is actually valid for arbitrary totally antisymmetric $k$-partite qudit states. 
\end{abstract}

\maketitle

\tableofcontents
\section{Introduction}
\label{Intro}
A quantum spin-$s$ state $\ket{\psi}$ is represented by a ray, \ie, a 1-dimensional linear subspace, in a $\tilde{N}$-dimensional Hilbert space $\mathcal{H}$ ($\tilde{N} \equiv 2s+1$), \ie, a point in the corresponding projective space $\Ps =\mathbb{C}P^N$ ($N\equiv 2s$). In a relatively little known 1932 paper~\cite{Maj:32}, Majorana showed how to uniquely characterize  
$\ket{\psi}$  by an unordered set of (possibly coincident) $2s$ directions in space, \ie, $2s$ points (\emph{stars}) on the unit sphere, known as the \emph{Majorana constellation} of $\ket{\psi}$ (see, \eg,~\cite{Ben.Zyc:17}). The construction is such that when $\ket{\psi}$ is transformed in $\mathcal{H}$ by the spin-$s$ irreducible representation of an $SU(2)$ transformation, the associated constellation rotates by the corresponding rotation in physical space. It can be shown that the directions of the Majorana stars  characterize, in the standard way, the states of $2s$ spin-1/2 particles, which, upon complete symmetrization, yield 
$\ket{\psi}$. Even when the spin-$s$ system is not really made up of spin-1/2 particles, the associated directions can be detected experimentally: aligning a Stern-Gerlach apparatus along any of them, the probability of measuring the minimal spin projection, $-s$, is equal to zero~\cite{Chr.Guz.Ser:18}. 

Cyclic evolution of quantum states gives rise to geometric phases, so that, to each closed curve $\gamma$ in $\Ps$, one may associate a phase factor $e^{i\varphi_\gamma}$, which is independent of the time parametrization of $\gamma$~\cite{Ber:84,Aha.Ana:87}. The concept has been generalized to the cyclic evolution of degenerate $k$-dimensional subspaces of $\mathcal{H}$, so that to each closed curve $\gamma$ in the Grassmannian 
$\text{Gr}_{k,\tilde{N}}$ (which is the set of $k$-planes through the origin in $\mathcal{H}$), one may associate a $k \times k$ unitary matrix $U_\gamma$, which, like its abelian analogue above,  does not depend on the time parametrization of $\gamma$~\cite{Wil.Zee:84,Ana:88,Muk.Sim:93}. Both the abelian and non-abelian versions of the effect have been invoked in the realization of quantum gates, their immunity to reparametrizations contributing to the robustness of the resulting quantum computation~\cite{%
Zan.Ras:99,solinas2004robustness,PhysRevLett.102.070502,golovach2010holonomic,solinas2012stability}. These developments have put emphasis on the geometric  concept of a $k$-plane in $\mathcal{H}$, as a natural generalization of that of a ray, which corresponds to $k=1$, and, inevitably, raise the question whether Majorana's visualization of spin-$s$ rays can be extended to spin-$s$ $k$-planes, the latter henceforth referred to as \emph{$(s,k)$-planes}. Apart from its obvious mathematical appeal (at least to the authors),  the question is well-motivated from a practical point of view, as it simplifies considerably the otherwise awkward task of identifying the possible rotational symmetries of an $(s,k)$-plane. Our aim in this work then is to generalize Majorana's stellar representation of spin-$s$ states, living in $\mathbb{P} \approx \text{Gr}_{1,\tilde{N}}$, to the case of $\text{Gr}_{k,\tilde{N}}$.

In section~\ref{MaP} we give some background information regarding the Majorana constellation and Grassmannians. Our solution to the problem stated above comes in two steps: in the first one, taken in section~\ref{kpCvHf}, we define, in close analogy to Majorana's construction, the principal constellation of an $(s,k)$-plane, which, however, is shown to be shared, for $k>1$,  by many different planes. Section~\ref{kpMvtPE} delivers the second step, by introducing the concept of a multiconstellation, which uniquely identifies an $(s,k)$-plane, for almost all such planes --- several examples illustrate the general theory, as well as its limitations. Finally, section~\ref{Epilogue} summarizes the findings, mentions possible extensions, and outlines a number of applications. Anticipating our discussion there, we mention that our solution in the form of a multiconstellation is actually valid for general spin-$s$ $k$-partite antisymmetric states.

\section{Majorana and Pl\"ucker}
\label{MaP}
\subsection{Majorana Constellations}
The reader is no doubt familiar with the fact that a spin-1/2 pure state may be characterized, up to an overall phase, by a point on the Bloch sphere, which gives the spin expectation value (SEV) of the state. The natural question of whether this visually appealing construction may be generalized to a spin-$s$ state was settled by Majorana in a 1932 paper, dealing with the behavior of spins in variable magnetic fields~\cite{Maj:32}. What Majorana pointed out was the fact that points in the projective Hilbert space $\Ps =\mathbb{C}P^N$ of a spin-$s$ system are in one-to-one correspondence with unordered sets of (possibly coincident) $2s$ points on the unit sphere. Details about this construction may be found in the literature (see, \eg, \cite{Ben.Zyc:17}, \cite{Chr.Guz.Ser:18}), we only present here the bare minimum. 

According to~\cite{Maj:32},  to a spin-$s$ state 
\begin{equation}
|\Psi\rangle=\sum_{m=-s}^s c_m |s,m\rangle
\, ,
\label{sstate}
\end{equation} 
where $S_z \ket{s,m}=m\ket{s,m}$,
one may associate its \emph{Majorana polynomial}
$P_{\ket{\Psi}}(\zeta)$,
\begin{equation}
P_{|\Psi\rangle}(\zeta)
=
\sum_{m=-s}^{s}
(-1)^{s-m}\sqrt{\binom{2s}{s-m}}
c_m \,  \zeta^{s+m}
\, ,
\label{pjk}
\end{equation}
where $\zeta$ is an auxiliary complex variable. The $2s$ roots of $P_{\ket{\Psi}}(\zeta)$, counted with multiplicity, in case some of them coincide, may be mapped to the Bloch sphere by stereographic projection from the south pole, giving rise to the \emph{Majorana constellation} of $\ket{\Psi}$.  Note that if the polynomial turns out of a lower degree, \ie, if $c_{m}=0$ for $m=s, s-1, \ldots, s-k+1$, then $\zeta=\infty$ should be considered a root of multiplicity 
$k$, resulting in the appearance of $k$ stars at the south pole of the Bloch sphere. The remarkable property of this construction is that when $\ket{\Psi}$ is transformed in Hilbert space by the matrix $D^{(s)}(g)$, representing the abstract element $g$ of $SU(2)$, its constellation rotates rigidly by (the rotation in $\mathbb{R}^3$ associated to) $g$ on the Bloch sphere. Thus, if $\ket{\Psi}$ has a particular rotational symmetry, in the sense that there exists an element $g_0 \in SU(2)$ such that $D^{(s)}(g_0)\ket{\Psi}=e^{i\alpha_0}\ket{\Psi}$, its constellation is invariant under $g_0$. The recipe given in~(\ref{pjk}) becomes more transparent by noting that
\begin{equation}
\label{ncohexp}
\ket{n}=\frac{1}{(1+\zeta \bar{\zeta})^s} 
\sum_{m=-s}^s
\sqrt{\binom{2s}{s-m}}\,  \zeta^{s-m} \, \ket{s,m}
\, ,
\end{equation}
where $\ket{n}$ denotes the spin coherent state in the direction $n$, the latter being related to $\zeta$ via stereographic projection, \ie, if the polar coordinates of $n$ are $(\theta,\phi)$, then 
$\zeta=\tan\frac{\theta}{2}\, e^{i \phi}$. 
Given the fact that if $\zeta$ is the stereographic projection of $n$, then $-1/\bar{\zeta}$ is that of $-n$, one gets
\begin{align}
\bra{-n}
&=
\frac{(\zeta \bar{\zeta})^s}{(1+\zeta \bar{\zeta})^s}
\sum_{m=-s}^s \sqrt{\binom{2s}{s-m}} \, (-1)^{s-m} \, \zeta^{-s+m} \, \bra{s,m}
\\
 &=
 \frac{(\bar{\zeta}/ \zeta)^s}{(1+\zeta \bar{\zeta})^s}
\sum_{m=-s}^s \sqrt{\binom{2s}{s-m}} \, (-1)^{s-m} \, \zeta^{s+m} \, \bra{s,m}
\end{align}
resulting, finally, in
\begin{equation}
\label{Majoip}
\braket{-n}{\Psi} = \frac{(\bar{\zeta}/\zeta)^s}{(1+\zeta\bar{\zeta})^s}  P_{\ket{\Psi}}(\zeta)
\, .
\end{equation}
Thus, the stars in the constellation of $\ket{\Psi}$ are antipodal to the directions of all coherent states orthogonal to $\ket{\Psi}$. This, in turn, may be traced to the fact that any spin-$s$ state may be obtained by symmetrization of a factorizable $2s$-qubit state --- see, \eg,~\cite{Chr.Guz.Ser:18} for the details. 
\begin{myexample}{A spin-2 constellation}{ex:s2k1}
Consider the spin-2  state $\ket{\psi_{\text{tetra}}}=(1,0,0,\sqrt{2},0)/\sqrt{3}$. The corresponding Majorana polynomial is 
$P_{\ket{\psi_{\text{tetra}}}}(\zeta)=\zeta^4-2\sqrt{2}\zeta$, with roots $(z_1,z_2,z_3,z_4)=(0,\sqrt{2},e^{i2\pi/3}\sqrt{2},e^{i4\pi/3}\sqrt{2})$, which project to the stars 
\begin{equation}
\label{tetrastars}
(n_1,n_2,n_3,n_4)=
\left(
\left( \rule{0ex}{3.8ex} 0,0,1 \right)
\, , \,
\left(\rule{0ex}{3ex}  -\frac{\sqrt{2}}{3},-\sqrt{\frac{2}{3}},-\frac{1}{3} \right)
\, , \,
\left( \rule{0ex}{3ex} -\frac{\sqrt{2}}{3},\sqrt{\frac{2}{3}},-\frac{1}{3} \right)
\, , \,
\left( \rule{0ex}{3ex} \frac{2\sqrt{2}}{3},0,-\frac{1}{3} \right)
\right)
\, ,
\end{equation}
that define the vertices of a regular tetrahedron. We conclude, \eg, that $\ket{\psi_{\text{tetra}}}$ picks up at most a phase when rotated around any of the above $n_i$ by an angle of $2\pi/3$.
\end{myexample}
\subsection{Some tools for Grassmannians}
\label{StfG}
The \emph{Grassmannian} $\text{Gr}_{k,n}$ is the set of $k$-dimensional linear subspaces (\ie, $k$-planes through the origin) in $\mathbb{C}^n$ (see, \eg, Ch.{} 10 of \cite{Sha.Rem:13}, Ch.{} XIV of \cite{Hod.Ped:52}, or Ch.{} 4.1 of~\cite{Sha:13}). Given a $k$-plane $\Pi \subset \mathbb{C}^n$, and a basis (\ie, a non-degenerate $k$-frame) $\{v_1,\ldots,v_k\}$ in $\Pi$, one may write down the $k \times n$ matrix $V$ of components of the $v$'s,
\begin{equation}
\label{Vdef}
V=
\left(
\begin{array}{ccc}
v_1^{\phantom{1}1} & \ldots & v_1^{\phantom{1} n}
\\
\vdots & \vdots & \vdots
\\
v_k^{\phantom{k}1} & \ldots & v_k^{\phantom{k}n}
\end{array}
\right)
\, ,
\end{equation}
which represents the $k$-frame. 
Switching to a different basis in $\Pi$, $v \rightarrow w$, $w_i=M_{i}^{\phantom{i}j}v_j$, with 
$M \in GL(k,\mathbb{C})$, leads to $V \rightarrow W=MV$ --- both $W$ and $V$ characterize the same $k$-plane. A standard form $\tilde{V}$ for $V$ may be chosen by taking $M$ above to be the inverse of the matrix defined by the first $k$ columns of $V$, then $\tilde{V}$ has a unit $k \times k$ matrix in that same position, and the rest of its entries, call them $m_{ij}$, $1 \leq i \leq k$, $1 \leq j \leq \tilde{k}$, where 
$\tilde{k} \equiv 2s+1-k$ is the codimension of $\Pi$, may be used as local coordinates on $\text{Gr}_{k,n}$,
\begin{equation}
\label{tildeVdef}
\tilde{V}
=
\left(
\begin{array}{ccccccc}
1 & 0 & \ldots & 0 & m_{11} & \ldots & m_{1\tilde{k}}
\\
\vdots &  &  & \vdots & \vdots & \vdots & \vdots 
\\
0 & 0 & \ldots & 1 & m_{k1} & \ldots & m_{k\tilde{k}}
\end{array}
\right)
\, ,
\end{equation}
in accordance with the (complex) dimension of $\text{Gr}_{k,n}$ being $k\tilde{k}$. 
Denote by $V^{\rha{I}}$ the minor $\Delta_{\rha{I}}$ of $V$, formed by the columns $\rha{I}=(i_1,\ldots,i_k)$ of $V$, with $1 \leq i_1< \ldots <i_k \leq  n$. Extend, for later convenience, this definition to arbitrary $k$-indices $I$ by total antisymmetry, \eg, $V^{(21)}=-V^{(12)}$, $V^{(11)}=0$, \etc. The set of all $\binom{n}{k}$ numbers $V^{\rha{I}}$ constitutes the \emph{Pl\"ucker coordinates} of the frame $V$  in $\mathbb{C}^{\binom{n}{k}}$. These are also projective coordinates for the $k$-plane $\Pi$, given that a change of basis 
$v \rightarrow w$ in $\Pi$, as above,  leads to $V^{\rha{I}}\rightarrow W^{\rha{I}}=\det(M) V^{\rha{I}}$. Thus, the plane $\Pi$ is mapped to a complex line in $\mathbb{C}^{\binom{n}{k}}$, \ie, a point in the projective space $\mathbb{P}^{\binom{n}{k}-1}$ --- this is the \emph{Pl\"ucker embedding} of $\text{Gr}(k,n)$ in $\mathbb{P}^{\binom{n}{k}-1}$. Note that a $k$-plane may be thought of as an equivalence class of $k$-frames, two frames being equivalent when their corresponding matrices are related by an invertible matrix, like $V$ and $W$  above. Accordingly, we write $\Pi=\pket{v_1, \ldots,v_k}=\pket{w_1,\ldots,w_k}=\pket{V}=\pket{W}$.

The above may be recast in a tighter language by considering the $k$-th exterior power of $\mathbb{C}^n$, $\wedge^k \mathbb{C}^n$, which, given a basis $\{e_1,\ldots,e_n\}$ of $\mathbb{C}^n$, inherits naturally the basis $\{ e_{\rha{A}} =e_{a_1}\wedge \ldots \wedge e_{a_k} \}$, with $1 \leq a_1 < \ldots < a_k \leq n$. One may then associate to the $k$-frame $V=\{v_i \}$ in $\Pi$ the $k$-vector $\mathbf{V}=v_1 \wedge \ldots \wedge v_k \in \wedge^k \mathbb{C}^n$. The Pl\"ucker coordinates defined above are just the components of this vector in the natural basis, 
\begin{equation}
\label{Plcoord}
\mathbf{V}=v_1 \wedge \ldots \wedge v_k=\sum_{\rha{I}} V^{\rha{I}} e_{\rha{I}}
\, .
\end{equation}
In terms of these vectors, a change of basis, as above, gives $\mathbf{W} \equiv w_1 \wedge \ldots \wedge w_k=\det(M) \, \mathbf{V}$, so that $\Pi$ may be identified with the ray $[\mathbf{V}]$ generated by $\mathbf{V}$ in 
$\wedge^k \mathbb{C}^n$. In the case of oriented planes, one must restrict $\det(M) >0$, and then $\Pi$ is only identified with half of the ray.

 Note that a general element  $\mathbf{P}=\sum_{\rha{I}} P^{\rha{I}} e_{\rha{I}}  \in \wedge^k \mathbb{C}^n$ is not \emph{factorizable} (or \emph{decomposable}), \ie, it cannot be written as a single $k$-fold wedge product --- the necessary and sufficient condition for factorizability is that the $P^{\rha{I}}$ satisfy the following quadratic (Pl\"ucker) relations (see, \eg, Ch.{} 1.5 of~\cite{Gri.Har:78}, Ch.{} 10.2 of~\cite{Sha.Rem:13}, Ch.{} 3.4 of~\cite{Jac:10}),
\begin{equation}
\label{Plurel}
\sum_{m=1}^{k+1} (-1)^m \,  P^{(i_1 \ldots i_{k-1} j_m)} \,  P^{(j_1 \ldots \widehat{j_{m}} \ldots j_{k+1})}=0
\, ,
\end{equation}
for all ordered multiindices $\rha{I}=(i_1,\ldots,i_{k-1})$, $\rha{J}=(j_1,\ldots,j_{k+1})$, where a hat above an index denotes omision of that index --- this is the analytical form of the Pl\"ucker embedding. Note that in writing out explicitly the above relations, one encounters, in general, coordinates $P^L$, with the multiindex $L$ not necessarily ordered, or with repeated indices --- in that case, one uses the antisymmetry mentioned above to achieve the proper ordering, or put the term equal to zero, respectively. 

Given a hermitian inner product $\ipc{\cdot}{\cdot}$ in $\mathbb{C}^n$, one may extend it to $k$-frames by
\begin{equation}
\label{ipkf}
\ipc{V}{W} = \det 
\left(
\begin{array}{ccc}
\ipc{v_1}{w_1} & \ldots & \ipc{v_1}{w_k}
\\
\vdots & \ldots & \vdots
\\
\ipc{v_k}{w_1} & \ldots & \ipc{v_k}{w_k}
\end{array}
\right)
\, ,
\end{equation}
which gives rise to the following  inner product between two $k$-planes $\Pi=\pket{V}$, $\Sigma=\pket{W}$,
\begin{equation}
\label{ipkp}
\ipc{\Pi}{\Sigma}=
\frac{|\ipc{V}{W}|}{\sqrt{\ipc{V}{V}}\sqrt{\ipc{W}{W}}}
\, .
\end{equation}

\section{The principal constellation of a spin-$s$ $k$-plane}
\label{kpCvHf}
Our first attempt at a stellar representation of a spin-$s$ $k$-plane (henceforth an \emph{$(s,k)$-plane}),  generalizes the view of the Majorana polynomial $P_{\ket{\Psi}}(\zeta)$ of a state $\ket{\Psi}$ as the polynomial part of the inner product $\braket{-n}{\Psi}$, where $n=(\theta,\phi)$ and $\zeta=\tan\frac{\theta}{2}e^{i\phi}$, which results in the stars of $\ket{\psi}$ being antipodal to the zeros of its Husimi function $H_{\ket{\psi}}(n)=|\braket{n}{\psi}|^2$. To this end, we need to generalize the concept of a spin-$s$ coherent state to that of a \emph{coherent $(s,k)$-plane}. 
\begin{definition}
\label{SEVdef}
For a general $(s,k)$-plane $\Pi=[\ket{\psi_1},\ldots,\ket{\psi_k}]$, with $\braket{\psi_\mu}{\psi_\nu}=\delta_{\mu \nu}$, $\mu,\nu=1,\ldots,k$, we define its \emph{spin expectation value} (SEV) $\langle \mathbf{S} \rangle_\Pi$ to be a vector in physical $\mathbb{R}^3$, with components $\langle S_i \rangle_\Pi$, $i=1,2,3,$ given by
\begin{equation}
\label{SEVPi}
\langle S_i \rangle_{\Pi}=\Tr 
\left(
\begin{array}{ccc}
\bra{\psi_1} S_i \ket{\psi_1} 
&
\ldots
&
\bra{\psi_1} S_i \ket{\psi_k}
\\
\vdots
&
\vdots
&
\vdots
\\
\bra{\psi_k} S_i \ket{\psi_1} 
&
\ldots
&
\bra{\psi_k} S_i \ket{\psi_k}
\end{array}
\right)
\, .
\end{equation}
\end{definition}
\begin{definition}
\label{coherentkplane_def}
An $(s,k)$-plane $\Pi$ is coherent if the modulus of its SEV is maximal among all $(s,k)$-planes.
\end{definition}
As the following theorem shows, the space of coherent $(s,k)$-planes is not different from that of the spin coherent states. 
\begin{theorem}
\label{ckp_thm}
Coherent $(s,k)$-planes are in 1-1 correspondence with unit vectors in $S^2 \subset \mathbb{R}^3$. For a given such vector $n$, the coherent $(s,k)$-plane along $n$, denoted by $\Pi_n$, is given by $\Pi_n=[\ket{n,s},\ket{n,s-1},\ldots,\ket{n,s-k+1}]$ with maximal SEV modulus $|\langle \mathbf{S} \rangle_{\Pi_n}|=\frac{k}{2}(2s+1-k)$.
\end{theorem}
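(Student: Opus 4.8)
The plan is to reduce the optimization to a standard eigenvalue variational principle. First I would observe that, for any orthonormal basis $\{\ket{\psi_\mu}\}_{\mu=1}^k$ of $\Pi$, the trace in (\ref{SEVPi}) is simply
\begin{equation}
\langle S_i \rangle_\Pi = \Tr(P_\Pi \, S_i), \qquad P_\Pi \equiv \sum_{\mu=1}^k \ket{\psi_\mu}\bra{\psi_\mu},
\end{equation}
with $P_\Pi$ the orthogonal projector onto $\Pi$. A unitary change of orthonormal basis within $\Pi$ leaves $P_\Pi$ invariant, so the SEV depends only on the plane, as it must. Writing $S_n \equiv \mathbf{S}\cdot n$ and denoting by $n^\ast$ a unit vector along $\langle \mathbf{S}\rangle_\Pi$ (when the latter is nonzero; otherwise the bound below is trivial), the modulus of the SEV is just its own component, $|\langle\mathbf{S}\rangle_\Pi| = n^\ast \cdot \langle\mathbf{S}\rangle_\Pi = \Tr(P_\Pi \, S_{n^\ast})$.

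For the upper bound I would invoke the Ky Fan maximum principle: for any Hermitian $A$, the maximum of $\Tr(P A)$ over rank-$k$ orthogonal projectors $P$ equals the sum of the $k$ largest eigenvalues of $A$. Applying this to $A = S_{n^\ast}$, whose spectrum coincides with that of $S_z$, namely $\{s, s-1, \ldots, -s\}$, gives
\begin{equation}
|\langle\mathbf{S}\rangle_\Pi| = \Tr(P_\Pi \, S_{n^\ast}) \le \sum_{j=0}^{k-1}(s-j) = \frac{k}{2}(2s+1-k) \equiv C,
\end{equation}
for every $(s,k)$-plane $\Pi$.

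To show the bound is attained, I would take $\Pi_n = [\ket{n,s}, \ldots, \ket{n,s-k+1}]$, the span of the top $k$ eigenvectors of $S_n$. By construction $\Tr(P_{\Pi_n} S_n) = \sum_{j=0}^{k-1}(s-j) = C$, so $C = n\cdot\langle\mathbf{S}\rangle_{\Pi_n} \le |\langle\mathbf{S}\rangle_{\Pi_n}| \le C$, the last inequality being the general bound just established. All three quantities therefore coincide: $\Pi_n$ is coherent with $|\langle\mathbf{S}\rangle_{\Pi_n}| = C$, and saturation of $n\cdot v \le |v|$ forces $\langle\mathbf{S}\rangle_{\Pi_n} = C\, n$, i.e. the SEV points exactly along $n$.

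It remains to upgrade this to a bijection. Injectivity of $n \mapsto \Pi_n$ is immediate, since $\langle\mathbf{S}\rangle_{\Pi_n} = C\,n$ recovers $n$ from the plane. For surjectivity, let $\Pi$ be any coherent plane and $n$ the direction of its SEV; then $\Tr(P_\Pi S_n) = C$ saturates the Ky Fan bound. The one genuinely delicate step, which I expect to be the main obstacle, is uniqueness of the maximizer: because the spin-$s$ spectrum is nondegenerate, the $k$-th eigenvalue $s-k+1$ is strictly separated from the $(k+1)$-th, $s-k$ (for $1 \le k \le 2s$), so the top-$k$ eigenspace of $S_n$ is unique and any rank-$k$ projector achieving the maximum must project onto it. Hence $\Pi = \Pi_n$, and $n \mapsto \Pi_n$ is the claimed bijection. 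Everything else is the variational principle plus the elementary squeeze above; only this spectral-gap argument requires real care.
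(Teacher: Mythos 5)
Your proof is correct, and it reaches the same underlying fact as the paper --- that the maximal SEV modulus is the sum of the $k$ largest eigenvalues of a spin component, attained precisely on the span of the corresponding eigenvectors --- but it packages the variational step differently. The paper first uses rotational equivariance of the SEV to fix the axis along $\hat z$, then passes to the exterior power: since $S_z$ acts on $\wedge^k\mathcal{H}$ as a derivation, $\langle S_z\rangle_\Pi$ becomes a Rayleigh quotient $\bra{\boldsymbol{\Psi}}S_z\ket{\boldsymbol{\Psi}}$ on $\wedge^k\mathcal{H}$, maximized at the top eigenvector $\ket{s,s}\wedge\ldots\wedge\ket{s,s-k+1}$ with eigenvalue $\frac{1}{2}k\tilde{k}$. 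You instead work with the rank-$k$ projector $P_\Pi$ and invoke the Ky Fan maximum principle for $\Tr(P_\Pi S_{n^\ast})$. The two are essentially dual formulations of the same inequality (Ky Fan is the statement that the top eigenvalue of the derivation action on $\wedge^k$ equals the sum of the top $k$ eigenvalues), but your route buys two things. First, it sidesteps a point the paper's ``it is clear that\ldots'' glosses over: maximizing the Rayleigh quotient over \emph{all} unit vectors of $\wedge^k\mathcal{H}$ only bounds the maximum over the decomposable ones, and one must note that the global maximizer happens to be decomposable; working with projectors makes this a non-issue. Second, you explicitly establish the 1--1 correspondence --- injectivity from $\langle\mathbf{S}\rangle_{\Pi_n}=Cn$, and surjectivity plus uniqueness of the maximizer from the spectral gap between the $k$-th and $(k+1)$-th eigenvalues of $S_n$ --- whereas the paper's proof only exhibits the maximizer and its value, leaving the uniqueness half of the theorem implicit. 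Your write-up is, if anything, the more complete of the two.
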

\begin{proof}
It is easily shown that rotating the kets $\ket{\psi_\mu}$, that define $\Pi$, by $D^{(s)}(R)$ results in a rotation of $\langle \mathbf{S} \rangle_\Pi$ by $R \in SO(3)$. We may then assume, without loss of generality, that  $\langle \mathbf{S} \rangle_\Pi$ is  along $\hat{z}$, so that $|\langle \mathbf{S} \rangle_\Pi|=|\langle S_z \rangle_\Pi|=|\bra{\psi_1}S_z\ket{\psi_1}+\ldots+\bra{\psi_k}S_z\ket{\psi_k}|$. $S_z$ acts on wedge products as a derivation, so, for $\Pi=[\ket{\psi_1},\ldots,\ket{\psi_k}]$, with the $\ket{\psi_\mu}$ orthonormal, 
\begin{equation}
\label{TrSzwedge}
\langle S_z\rangle_\Pi
=
\bra{\psi_1}\wedge \ldots \wedge \bra{\psi_k} S_z \ket{\psi_1}\wedge \ldots \wedge \ket{\psi_k}
\, ,
\end{equation}
where the inner product of $k$-fold wedge products is $k!$ times that of the corresponding $k$-frames (see~(\ref{ipkf})). It is clear that the \rhs{} of~(\ref{TrSzwedge}) is maximized when $\ket{\psi_1}\wedge \ldots \wedge \ket{\psi_k}$ is the eigenvector of $S_z$ with the maximal eigenvalue, \ie, $\ket{s,s}\wedge \ldots \wedge \ket{s,s-k+1}$, with eigenvalue $s+(s-1)+\ldots +(s-k+1)=\frac{1}{2}k \tilde{k}$.
\end{proof}
\begin{theorem}
\label{statesinPin}
Every state $\ket{\psi} \in \Pi_n$ has at least $\tilde{k}$ stars along $n$.
\end{theorem}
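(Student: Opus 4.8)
The plan is to exploit the $SU(2)$-covariance of the Majorana construction to reduce to the single direction $n=\hat{z}$, and then to read off the multiplicity of the root $\zeta=0$ of the Majorana polynomial directly from the vanishing of the lower-weight components. First I would invoke the rigid-rotation property of the constellation: rotating $\ket{\psi}$ by $D^{(s)}(R)$ rotates its stars by $R$, and carries $\Pi_n$ to $\Pi_{Rn}$. Hence it suffices to prove the claim for $n=\hat{z}$, where Theorem~\ref{ckp_thm} gives $\Pi_{\hat{z}}=\pket{\ket{s,s},\ket{s,s-1},\ldots,\ket{s,s-k+1}}$; the statement for arbitrary $n$ then follows by applying a rotation taking $\hat{z}$ to $n$.

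Next, every state $\ket{\psi}\in\Pi_{\hat{z}}$ is a linear combination $\ket{\psi}=\sum_{m=s-k+1}^{s}c_m\ket{s,m}$, so that $c_m=0$ for all $m\leq s-k$. Substituting this into the Majorana polynomial~(\ref{pjk}), every surviving monomial $\zeta^{s+m}$ carries exponent $s+m\geq s+(s-k+1)=2s+1-k=\tilde{k}$, so $P_{\ket{\psi}}(\zeta)$ is divisible by $\zeta^{\tilde{k}}$ and $\zeta=0$ is a root of multiplicity at least $\tilde{k}$. Since stereographic projection sends $\zeta=0$ to the north pole $\theta=0$, \ie, to the direction $\hat{z}=n$, this root contributes at least $\tilde{k}$ coincident stars along $n$, which is exactly the assertion.

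I do not expect a genuine obstacle here: once the reduction to $\hat{z}$ is in place the argument is essentially bookkeeping. The only points requiring care are the orientation conventions — checking that $\zeta=0$ corresponds to the direction $n$ itself rather than its antipode, consistent with~(\ref{ncohexp}) and~(\ref{Majoip}) — and the harmless arithmetic identity $2s+1-k=\tilde{k}$ that turns the exponent count into the codimension. It is also worth confirming that the bound is sharp generically, \ie, that $c_{s-k+1}\neq0$ gives multiplicity exactly $\tilde{k}$, leaving the remaining $k-1$ stars free to lie elsewhere; this already hints that the $\tilde{k}$ stars along $n$ are common to all of $\Pi_n$ and will not by themselves distinguish it, which motivates the more refined constructions of the later sections.
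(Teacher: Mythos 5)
Your proof is correct and is essentially the paper's argument: both rest on the fact that every state in $\Pi_n$ is a linear combination of the $k$ states $\ket{n,s},\ldots,\ket{n,s-k+1}$, each of whose Majorana polynomials has the stereographic image of $n$ as a root of multiplicity at least $\tilde{k}$, a property inherited by linear combinations. Your explicit reduction to $n=\hat{z}$ via rotation covariance just makes concrete the step the paper labels ``easily seen,'' by turning it into the observation that all surviving monomials $\zeta^{s+m}$ have exponent at least $2s+1-k=\tilde{k}$.
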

\begin{proof}
Every $\ket{\psi}$ in $\Pi_n$ is a linear combination of the states $\Pi_n$ factorizes into, therefore, the Majorana polynomial of $\ket{\psi}$ is the same linear combination of the Majorana polynomials of those states. But the latter all have  at least $\tilde{k}$ stars along $n$, property that is easily seen to be inherited by $P_{\ket{\psi}}$.
\end{proof}
We may now define the \emph{principal constellation} of an $(s,k)$-plane $\Pi=[W]$ as the set of those stars $n$ 
(counted with multiplicity) for which  
$\ipc{\Pi_{-n}}{\Pi}=0$ --- the corresponding polynomial, \ie, the one whose roots are the stereographic projections of those stars, will be the \emph{ principal polynomial} $P_\Pi(\zeta)$ of $\Pi$;
formally, it is defined by
\begin{equation}
\label{PPiVW}
P_\Pi(\zeta)=\zeta^{k \tilde{k}} \ipc{\tilde{V}_{-n}}{W}
\, ,
\end{equation}
where $\Pi_n=[\tilde{V}_{n}]$, $\tilde{V}_{-n}$ is the standard representative of its class (see~(\ref{tildeVdef})), 
 and $\zeta$ is related to $n$ in the standard way.
\begin{theorem}
\label{starspsiPi}
A star $n \in S^2$ belongs to the constellation of  an $(s,k)$-plane $\Pi$ if and only if  there exists a state $\ket{\psi}\in \Pi$ the constellation of which has at least $k$ stars along $n$.
\end{theorem}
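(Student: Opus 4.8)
The plan is to reduce the statement to two essentially independent facts and then chain them through the definition of the principal constellation. Unwinding definitions, $n$ belongs to the constellation of $\Pi$ precisely when $\ipc{\Pi_{-n}}{\Pi}=0$, which by (\ref{ipkp}) and (\ref{ipkf}) is equivalent to the vanishing of the cross-Gram determinant $\det\big(\ipc{a_i}{w_j}\big)$, where $\{a_i\}$ is any frame of $\Pi_{-n}$ and $\{w_j\}$ any frame of $\Pi$. So the goal is to show that this determinant vanishes if and only if some $\ket{\psi}\in\Pi$ carries at least $k$ stars at $n$, and the bridge between the two is an orthogonality characterization of multiple stars.

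First I would establish the orthogonality lemma: a state $\ket{\psi}$ has at least $k$ stars along $n$ if and only if $\ket{\psi}$ is orthogonal to the entire coherent plane $\Pi_{-n}=[\ket{-n,s},\ldots,\ket{-n,s-k+1}]$ of Theorem~\ref{ckp_thm}. Since the Majorana constellation, the coherent plane $\Pi_{-n}$, and the inner product are all $SU(2)$-covariant (applying $D^{(s)}(R)$ rotates the stars of $\ket{\psi}$ rigidly and sends $\Pi_{-n}$ to $\Pi_{-Rn}$, while preserving orthogonality by unitarity), I may rotate $n$ to the north pole $\hat z$, where $\zeta=0$. There the claim is transparent: $\ket{-\hat z,s-j}=\ket{s,\,j-s}$, so orthogonality to $\Pi_{-\hat z}$ means $c_m=\braket{s,m}{\psi}=0$ for $m=-s,\ldots,-s+k-1$, and by (\ref{pjk}) these $c_m$ are (up to nonzero factors) exactly the coefficients of $\zeta^{0},\ldots,\zeta^{k-1}$ in $P_{\ket{\psi}}$; their simultaneous vanishing is precisely the condition that $\zeta=0$ be a root of multiplicity $\geq k$, i.e.\ that $\ket{\psi}$ have at least $k$ stars at $\hat z$.

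It then remains to prove the linear-algebra lemma: for the $k$-planes $A=\Pi_{-n}$ and $B=\Pi$, the determinant $\det\big(\ipc{a_i}{w_j}\big)$ vanishes if and only if there is a nonzero $\ket{\psi}\in B$ with $\ket{\psi}\perp A$. This is immediate: the matrix $G_{ij}=\ipc{a_i}{w_j}$ is singular iff it has a nonzero right null vector $x$, and putting $\ket{\psi}=\sum_j x_j\ket{w_j}\in\Pi$ (nonzero, since the $\ket{w_j}$ are independent) the null condition $\sum_j\ipc{a_i}{w_j}x_j=0$ reads $\ipc{a_i}{\psi}=0$ for every $i$, i.e.\ $\ket{\psi}\perp\Pi_{-n}$; the converse runs backwards. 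Assembling the two lemmas closes the chain $n\in\text{const.}\Longleftrightarrow\ipc{\Pi_{-n}}{\Pi}=0\Longleftrightarrow\exists\,\ket{\psi}\in\Pi,\ \ket{\psi}\perp\Pi_{-n}\Longleftrightarrow\exists\,\ket{\psi}\in\Pi\ \text{with}\ \geq k\ \text{stars at}\ n$, which is the assertion.

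I expect the orthogonality lemma of the second step to be the main obstacle, as it is the only place where the analytic content of the Majorana map — the dictionary between star multiplicities and vanishing polynomial coefficients — actually enters; the reduction to $\hat z$ is what renders it routine. The one point demanding care is the behaviour at the point at infinity: when $-n$ approaches the south pole the standard frame $\tilde V_{-n}$ degenerates, and the factor $\zeta^{k\tilde{k}}$ in (\ref{PPiVW}) is precisely what absorbs the corresponding stars there, so the equivalence extends continuously to that case. The linear-algebra lemma is elementary and poses no difficulty.
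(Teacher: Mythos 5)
Your proposal is correct, and its skeleton coincides with the paper's: both arguments run through the chain ``$n$ in the constellation $\Leftrightarrow\ \ipc{\Pi_{-n}}{\Pi}=0\ \Leftrightarrow$ some nonzero $\ket{\psi}\in\Pi$ is orthogonal to all of $\Pi_{-n}\ \Leftrightarrow$ that $\ket{\psi}$ has $\geq k$ stars at $n$.'' The one place you diverge is the last link. The paper gets it by observing that $\Pi_{-n}^\perp$ is the coherent $(s,\tilde{k})$-plane along $n$ and then invoking Theorem~\ref{statesinPin}, which guarantees that every state of that plane has at least $\tilde{\tilde{k}}=k$ stars along $n$; you instead prove the orthogonality characterization directly, rotating $n$ to $\hat z$ and reading off that orthogonality to $\ket{s,-s},\ldots,\ket{s,-s+k-1}$ kills exactly the coefficients of $\zeta^0,\ldots,\zeta^{k-1}$ in $P_{\ket{\psi}}$. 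Your route is slightly longer but buys something real: it delivers the equivalence as a genuine biconditional at every step (Theorem~\ref{statesinPin} as stated only gives the direction ``in the coherent plane $\Rightarrow$ many stars,'' so the paper's proof, read literally, establishes only the ``only if'' half of the theorem), and your explicit null-vector argument substantiates the paper's unproved assertion that two orthogonal $k$-planes each contain a vector orthogonal to the whole of the other. Both points are minor, but your version is the more self-contained of the two.
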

\begin{proof}
A star $n$ belongs to the constellation $C_\Pi$ of an$(s,k)$-plane $\Pi$ iff $\ipc{\Pi_{-n}}{\Pi}=0$. When two $k$-planes are orthogonal, there exists in each of them a vector that is orthogonal to all the vectors of the other. Thus, there is a state $\ket{\psi} \in \Pi$ that is orthogonal to all the states in $\Pi_{-n}$, and belongs, therefore, to the orthogonal complement $\Pi_{-n}^\perp$ of $\Pi_{-n}$. The latter is easily seen to be a coherent $(s,\tilde{k})$-plane along $n$, so that, due to theorem~\ref{statesinPin}, $\ket{\psi}$ has at least $\tilde{\tilde{k}}=k$ stars along $n$.
\end{proof}
We state at this point that the degree of $P_\Pi(\zeta)$, for an $(s,k)$-plane $\Pi$, is $k\tilde{k}$. There are various ways to see this --- a simple one is given in Corollary~\ref{degreeMajoPoly_thm} below.
Thus, $(s,k)$-planes have Majorana constellations of $k \tilde{k}$ stars, some of which may coincide. Just like in the original Majorana polynomial, if $P_\Pi(\zeta)$ turns out to be of a lower degree, the missing roots are taken to be at infinity, so that the missing stars of the constellation are put at the south pole.
\begin{myexample}{A tetrahedral $(\frac{3}{2},2)$-plane}{ex:s32k2}
Denote by $\{e_i, \, i=1,\ldots,4\}$ the orthonormal $S_z$-eigenbasis in the spin-3/2 Hilbert space $\mathbb{C}^4$,
\begin{equation}
\label{s32eb}
\{e_1,e_2,e_3,e_4\}
=
\left\{
\ket{\frac{3}{2},\frac{3}{2}},
\ket{\frac{3}{2},\frac{1}{2}},
\ket{\frac{3}{2},-\frac{1}{2}},
\ket{\frac{3}{2},-\frac{3}{2}}
\right\}
\, .
\end{equation}
The induced basis in $\wedge^2 \mathbb{C}^4$ is
$\{ e_{12},e_{13},e_{14},e_{23},e_{24},e_{34} \}$, with $e_{ij} \equiv e_i \wedge e_j$. The coherent $(\frac{3}{2},2)$-plane along $z$ is $\Pi_z=e_{12}$, with corresponding matrix 
\begin{equation}
\label{Vcohz}
V_z=
\left(
\begin{array}{cccc}
1 & 0 & 0 & 0
\\
0 & 1 & 0 & 0
\end{array}
\right)
\, ,
\end{equation}
which is already in the standard form, so that, in this chart of  $\text{Gr}_{2,4}$, $\Pi_z$ is at the origin (the four rightmost entries of $V_z$ are zero).
We may rotate $\Pi_z$ to a general direction $n=(\theta,\phi)$ to obtain $\Pi_n$ (using, \eg, the ``geodesic'' rotation $R_{(-\sin\phi,\cos\phi,0),\theta}$) --- the corresponding matrix is
\begin{equation}
\label{Pins32k2}
V_n
=
\left(
\begin{array}{cccc}
 \cos ^3\left(\frac{\theta }{2}\right)
   & -\frac{1}{4} \sqrt{3} e^{i \phi }
   \csc \left(\frac{\theta }{2}\right)
   \sin ^2(\theta ) & \frac{1}{2}
   \sqrt{3} e^{2 i \phi } \sin
   \left(\frac{\theta }{2}\right) \sin
   (\theta ) & -e^{3 i \phi } \sin
   ^3\left(\frac{\theta }{2}\right) \\
 \frac{1}{4} \sqrt{3} e^{-i \phi }
   \csc \left(\frac{\theta }{2}\right)
   \sin ^2(\theta ) & \frac{1}{4}
   \left(\cos \left(\frac{\theta
   }{2}\right)+3 \cos \left(\frac{3
   \theta }{2}\right)\right) &
   \frac{1}{4} e^{i \phi } \left(\sin
   \left(\frac{\theta }{2}\right)-3
   \sin \left(\frac{3 \theta
   }{2}\right)\right) & \frac{1}{2}
   \sqrt{3} e^{2 i \phi } \sin
   \left(\frac{\theta }{2}\right) \sin
   (\theta ) \\
\end{array}
\right)
\, ,
\end{equation}
which, brought to the standard form, becomes
\begin{equation}
\label{Pins32k2sta}
\tilde{V}_n
=
\left(
\begin{array}{cccc}
 1 & 0 & -\sqrt{3} \zeta^2 & 2 \zeta^3
 \\
 0 & 1 & -2\zeta & \sqrt{3} \zeta^2
\end{array}
\right)
\, ,
\end{equation}
where $\zeta=\tan \frac{\theta}{2} e^{i \phi}$ is the stereographic image of $n$.

Consider now, as an example, the  $(\frac{3}{2},2)$-plane $\Pi_{\text{tetra}}$, with standard representative
\begin{equation}
\label{Wtetra}
\tilde{W}_{\text{tetra}}
=
\left(
\begin{array}{cccc}
1 & 0 & 0 & \sqrt{2}
\\
0 & 1 & 0 & 0
\end{array}
\right)
\, ,
\end{equation}
and compute, using~(\ref{ipkf}), (\ref{PPiVW}),
\begin{equation}
\label{ipPinW}
P_{\Pi_{\text{tetra}}}(\zeta)=\zeta^4 \ip{V_{-n}}{\tilde{W}_{\text{tetra}}}
=
\zeta^4-2\sqrt{2}\zeta
\, ,
\end{equation}
where we used the fact that the stereographic image of $-n$ is $-1/\bar{\zeta}$, $\bar{\zeta}$ denoting the complex conjugate of $\zeta$. Note that this coincides with the Majorana polynomial of the tetrahedral state, in Example~\ref{ex:s2k1} --- we conclude that the principal constellation of $\Pi_{\text{tetra}}$ is the same regular tetrahedron found there.
\end{myexample}

The above definition of $P_\Pi(\zeta)$, while quite analogous to that of the standard Majorana polynomial, turns out to be rather awkward to work with, as it typically involves the computation of large rotation matrices, which take $\Pi_z$ to $\Pi_n$. It also fails to shed any light to the natural question of the relation between the principal polynomial of a plane, $P_\Pi(\zeta)$, and those of the states it factorizes into, $\{ P_{\ket{\psi_\mu}}(\zeta), \, \mu=1,\ldots,k\}$.
Both shortcomings are bypassed by the following
\begin{theorem}
\label{MajoPolyW}
The principal polynomial $P_{\Pi}(\zeta)$ of an $(s,k)$-plane $\Pi=\ket{\psi_1}\wedge \ldots \wedge \ket{\psi_k}$ is given by the Wroskian of the Majorana polynomials $P_{\ket{\psi_\mu}}(\zeta)$ of the states $\ket{\psi_\mu}, \mu=1,\ldots,k$, \ie,
\begin{equation}
\label{WronskiMap}
P_\Pi(\zeta)=
\det
\left(
\begin{array}{cccc}
P_{\ket{\psi_1}}(\zeta) 
&
P'_{\ket{\psi_1}}(\zeta) 
&
\ldots
&
P^{(k-1)}_{\ket{\psi_1}}(\zeta) 
\\
\vdots
&
\vdots
&
\vdots
&
\vdots
\\
P_{\ket{\psi_k}}(\zeta) 
&
P'_{\ket{\psi_k}}(\zeta) 
&
\ldots
&
P^{(k-1)}_{\ket{\psi_k}}(\zeta) 
\end{array}
\right)
\, ,
\end{equation}
where $P'(\zeta)\equiv \partial P/\partial\zeta$ and $P^{(r)}(\zeta)\equiv \partial^r P/\partial\zeta^r$.
\end{theorem}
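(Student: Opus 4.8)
The plan is to evaluate the defining inner product $\ipc{\tilde V_{-n}}{W}$ in~(\ref{PPiVW}) using the \emph{coherent-state frame} of $\Pi_{-n}$ supplied by Theorem~\ref{ckp_thm}, rather than its standard representative. By that theorem $\Pi_{-n}=[\ket{-n,s},\ket{-n,s-1},\ldots,\ket{-n,s-k+1}]$, so, up to the determinant of the $GL(k,\mathbb{C})$ matrix $M(\zeta)$ that brings this frame to the standard form $\tilde V_{-n}$, the quantity $\ipc{\tilde V_{-n}}{W}$ is, by~(\ref{ipkf}), the $k\times k$ determinant of the overlaps $\braket{-n,s-(\mu-1)}{\psi_\nu}$, $\mu,\nu=1,\ldots,k$. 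The goal is then to show that this determinant is, after stripping a common prefactor, the Wronskian~(\ref{WronskiMap}), and that all accumulated prefactors cancel against the overall $\zeta^{k\tilde k}$.

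The key lemma is a differential-operator reading of the excited overlaps. For the top state, (\ref{Majoip}) already gives $\braket{-n}{\psi_\nu}=\kappa(\zeta,\bar\zeta)\,P_{\ket{\psi_\nu}}(\zeta)$, with $\kappa=(\bar\zeta/\zeta)^s/(1+\zeta\bar\zeta)^s$; that is, pairing with the coherent bra $\bra{-n,s}$ reproduces the Majorana polynomial. The states $\ket{-n,s-j}$ are obtained from $\ket{-n,s}$ by repeated action of the lowering operator along $-n$, and in the holomorphic representation in which $\psi\mapsto P_{\ket{\psi}}(\zeta)$ every spin component acts as a \emph{first-order} operator in $\zeta$. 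Hence I expect $\braket{-n,s-j}{\psi_\nu}=c_j(\zeta,\bar\zeta)\,D^{\,j}P_{\ket{\psi_\nu}}(\zeta)$ for a single first-order operator $D=a(\zeta)\partial_\zeta+b(\zeta)$ and $\nu$-independent prefactors $c_j$ (with $c_0=\kappa$), the leading part of $D^{\,j}$ being $a^{\,j}\partial_\zeta^{\,j}$.

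Granting this, the determinant reduction is purely algebraic. Since $D$ is first order, $D^{\,j}P=a^{\,j}\partial_\zeta^{\,j}P+(\text{lower }\zeta\text{-derivatives of }P)$ with coefficients independent of $\nu$, so the overlap matrix factors as $\text{diag}(c_j)\,L\,(\partial_\zeta^{\,\mu-1}P_{\ket{\psi_\nu}})$ with $L$ lower-triangular, $L_{jj}=a^{\,j}$, and $\det L=\prod_{j=0}^{k-1}a^{\,j}=a^{\binom{k}{2}}$. Its determinant is therefore $a^{\binom{k}{2}}\prod_j c_j$ times the Wronskian~(\ref{WronskiMap}). Reinstating $\det M(\zeta)$ and the overall $\zeta^{k\tilde k}$, one must check that the full $\zeta$-dependent prefactor collapses to a nonzero constant, which yields $P_\Pi(\zeta)$ equal to the Wronskian. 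As an independent check, the Wronskian of $k$ polynomials of degree $2s$ has degree exactly $k\tilde k=k(2s+1-k)$ (the would-be higher-degree terms cancel), in agreement with the stated degree of $P_\Pi$; moreover a change of basis $\psi_\mu\mapsto\sum_\nu M_{\mu\nu}\psi_\nu$ multiplies the Wronskian by $\det M$, matching the projective (Pl\"ucker) character of $\Pi$, and an $SU(2)$ transformation of the $\ket{\psi_\mu}$ rotates the Wronskian's roots rigidly, as required.

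The main obstacle is the key lemma together with the prefactor bookkeeping. Establishing the precise first-order realization $D=a\partial_\zeta+b$ of the $-n$ lowering operator in the Majorana representation, and computing the $c_j$, requires care; equally delicate is tracking the antipodal/conjugation convention $-n\leftrightarrow-1/\bar\zeta$ through $\kappa$ and through $\det M(\zeta)$, so that the holomorphic variable $\zeta$ appearing in the Wronskian is exactly the one in~(\ref{PPiVW}). One must verify in particular that the residual prefactor is a genuine constant, rather than a power of $\zeta$ (which would displace stars to the poles) or a reflection $\zeta\to-\zeta$ (which would rotate the constellation); confirming that these conventions conspire to leave only an immaterial overall constant is where the real work lies.
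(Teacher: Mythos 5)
Your plan is workable, but it is a genuinely different --- and considerably heavier --- route than the one the paper takes. The paper's proof never touches the coherent frame: it invokes Theorem~\ref{starspsiPi}, by which $\zeta_0$ is a root of $P_\Pi$ precisely when some $\ket{\psi}=\sum_\mu c_\mu\ket{\psi_\mu}\in\Pi$ has a $k$-fold star there, i.e., when the homogeneous system $\sum_\mu c_\mu P^{(r)}_{\ket{\psi_\mu}}(\zeta_0)=0$, $r=0,\ldots,k-1$, admits a nontrivial solution --- which happens iff the Wronskian vanishes at $\zeta_0$. This identifies the zero sets of the two sides of (\ref{WronskiMap}) in three lines and sidesteps all of your prefactor bookkeeping, at the cost of leaving the matching of multiplicities and the overall constant implicit. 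Your route computes $\ipc{\tilde{V}_{-n}}{W}$ head-on, and your key lemma is in fact true: in the Majorana representation the $\mathfrak{su}(2)$ generators act as first-order operators in the polynomial variable $\omega$ (the binomial factors in (\ref{pjk}) are there precisely to make this the standard $SL(2)$ action on binary forms), and the raising operator along $-n$ is, up to scale, the unique nilpotent annihilating $(\omega-u)^{2s}$ with $u=-1/\bar{\zeta}$, namely $D=(\omega-u)^2\partial_\omega-2s(\omega-u)$; since $\bra{-n,s-j}\propto\bra{-n,s}\,(S^{(-n)}_+)^j$, equation (\ref{Majoip}) then gives $\braket{-n,s-j}{\psi_\nu}$ equal, up to $\nu$-independent factors, to $D^jP_{\ket{\psi_\nu}}$ evaluated at $\zeta$. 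The leading coefficient $a=(\zeta-u)^2=\left((1+\zeta\bar{\zeta})/\bar{\zeta}\right)^2$ never vanishes, so your triangular reduction of the overlap determinant to $a^{\binom{k}{2}}\prod_j c_j$ times the Wronskian is legitimate, and only the advertised prefactor check remains. What your approach buys, if completed, is (\ref{WronskiMap}) as an equality of polynomials, overall constant included; what the paper's buys is brevity and a conceptual explanation (the Wronskian detects exactly the $k$-fold star coincidences of Theorem~\ref{starspsiPi}), which is all that is needed to define the principal constellation.
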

\begin{proof}
Consider a star $n$ in the constellation of $\Pi$ and call $\zeta_0$ its stereographic image. By theorem~\ref{starspsiPi} this only happens iff there exists a state $\ket{\psi} \in \Pi$ the constellation of which has at least $k$ stars along $n$, so that 
$P_{\ket{\psi}}$ has $\zeta_0$ as a $k$-fold root. But $\ket{\psi} \in \Pi$ implies that $P_{\ket{\psi}}$ can be written as a linear combination of $P_{\ket{\psi_\mu}}$, $\mu=1, \ldots,k$,
\begin{equation}
\label{Ppsimu}
P_{\ket{\psi}}=\sum_{\mu=1}^k c_\mu P_{\ket{\psi_\mu}}
\, .
\end{equation}
$\zeta_0$ being a $k$-fold root of $P_{\ket{\psi}}$ is equivalent to it being a root of $P_{\ket{\psi}}$  and of all its first $k-1$ derivatives,
\begin{equation}
\label{zetamultiroot}
\sum_{\mu=1}^k c_\mu P_{\ket{\psi_\mu}}(\zeta_0)
=
0
\, ,
\qquad
\sum_{\mu=1}^k c_\mu \frac{\partial P_{\ket{\psi_\mu}}}{\partial \zeta}(\zeta_0)
=
0
\, ,
\quad
\ldots
\, ,
\qquad
\sum_{\mu=1}^k c_\mu  P^{(k-1)}_{\ket{\psi_\mu}}(\zeta_0)
=
0
\, ,
\end{equation}
where $P^{(r)}_{\ket{\psi}}(\zeta) \equiv \partial^r P_{\ket{\psi}}(\zeta)/\partial\zeta^r$. The above equations define a linear system in the unknowns $c_\mu$, which has a nontrivial solution iff the determinant of its coefficients is zero.
\end{proof}
The map from the Majorana polynomials of the states to the principal polynomial of the plane given in~(\ref{WronskiMap}) is known as a \emph{Wronski map} and plays an important role in  algebraic geometry, combinatorics, and control theory (see, \eg, \cite{Ere.Gab:02}).

Since an $(s,k)$-plane $\Sigma$ and its orthogonal complement $\Sigma^\perp$ carry the same geometrical information, one expects their constellations to be related.
\begin{theorem}
\label{Pianti}
The principal constellations of an $(s,k)$-plane and its orthogonal complement are antipodal to each other. 
\end{theorem}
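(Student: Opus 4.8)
The plan is to reduce the statement to two facts that are essentially already in hand, and then to treat the matching of multiplicities separately. Write $\tilde k=\tilde N-k$, and for $1\le j\le \tilde N$ let $\Pi_n^{(j)}=[\ket{n,s},\ldots,\ket{n,s-j+1}]$ denote the coherent $(s,j)$-plane along $n$. By definition the principal constellation of $\Pi$ consists of the stars $n$ with $\ipc{\Pi_{-n}^{(k)}}{\Pi}=0$, while that of the $(s,\tilde k)$-plane $\Pi^\perp$ consists of the stars $m$ with $\ipc{\Pi_{-m}^{(\tilde k)}}{\Pi^\perp}=0$. Setting $m=-n$, the theorem amounts, at the level of the sets of stars, to the single equivalence $\ipc{\Pi_{-n}^{(k)}}{\Pi}=0 \iff \ipc{\Pi_n^{(\tilde k)}}{\Pi^\perp}=0$.

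First I would establish the linear-algebra symmetry $\ipc{A}{B}=0 \iff \ipc{A^\perp}{B^\perp}=0$ for arbitrary subspaces $A,B$ of $\Hs$. As recalled in the proof of Theorem~\ref{starspsiPi}, the Grassmann inner product~(\ref{ipkp}) vanishes precisely when one plane contains a nonzero vector lying in the orthogonal complement of the other; since the Gram matrix $\ipc{a_i}{b_j}$ is singular iff it has a nontrivial right kernel (giving a vector of $B$ in $A^\perp$) and iff it has a nontrivial left kernel (giving a vector of $A$ in $B^\perp$), one gets the double characterization $\ipc{A}{B}=0 \iff A\cap B^\perp\neq\{0\} \iff A^\perp\cap B\neq\{0\}$. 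Applying it to the pair $(A^\perp,B^\perp)$ gives $\ipc{A^\perp}{B^\perp}=0\iff A^\perp\cap B\neq\{0\}$, and comparison with the characterization of $\ipc{A}{B}=0$ yields the claimed symmetry.

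Second, I would identify the orthogonal complement of a coherent plane. Since $\ket{-n,m}$ coincides, up to phase, with $\ket{n,-m}$, the plane $\Pi_{-n}^{(k)}$ is the span of the \emph{lowest} $k$ weight states along $n$; its orthogonal complement is therefore the span of the top $\tilde k$ of them, i.e.\ $(\Pi_{-n}^{(k)})^\perp=\Pi_n^{(\tilde k)}$ --- exactly the fact already used in the proof of Theorem~\ref{starspsiPi}. Feeding this, together with $(\Pi^\perp)^\perp=\Pi$, into the symmetry above produces $\ipc{\Pi_{-n}^{(k)}}{\Pi}=0 \iff \ipc{(\Pi_{-n}^{(k)})^\perp}{\Pi^\perp}=0 \iff \ipc{\Pi_n^{(\tilde k)}}{\Pi^\perp}=0$, which is precisely the desired set-level equivalence: $n$ is a star of $\Pi$ iff $-n$ is a star of $\Pi^\perp$.

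The main obstacle is to upgrade this to the statement \emph{with multiplicities}, since the set-level equivalence alone does not control the orders of the roots of $P_\Pi$ and $P_{\Pi^\perp}$. Note first that both polynomials have degree $k\tilde k$ (for $\Pi^\perp$ this is $\tilde k\cdot\tilde{\tilde k}=\tilde k k$), so the total star counts already agree. To match multiplicities pointwise I would aim for the polynomial identity $P_{\Pi^\perp}(\zeta)\propto \zeta^{k\tilde k}\,\overline{P_\Pi(-1/\bar\zeta)}$, i.e.\ that $P_{\Pi^\perp}$ is the antipodal transform of $P_\Pi$, which makes antipodal roots correspond with equal multiplicities. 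The natural route is via the Wronskian representation of Theorem~\ref{MajoPolyW}: under the Majorana identification $\ket\psi\mapsto P_{\ket\psi}$ the space of polynomials attached to $\Pi^\perp$ is the orthogonal complement of the one attached to $\Pi$, and I would invoke the classical duality of Wronskians of complementary subspaces of polynomials (as in the Wronski-map literature, \cite{Ere.Gab:02}) together with the behaviour of the Majorana map under the antipodal involution $\zeta\mapsto-1/\bar\zeta$. Equivalently, one can argue geometrically: the multiplicity of a star $n$ in the constellation of $\Pi$ is governed by how $\Pi$ meets the flag of coherent planes $\Pi_n^{(1)}\subset\Pi_n^{(2)}\subset\cdots$ along $n$, and since orthogonal complementation reverses this flag into the one along $-n$, namely $(\Pi_n^{(j)})^\perp=\Pi_{-n}^{(\tilde N-j)}$, the intersection pattern of $\Pi$ with the flag at $n$ is dual to that of $\Pi^\perp$ with the flag at $-n$, forcing the multiplicities to coincide. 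Making this intersection count precise is the delicate step.
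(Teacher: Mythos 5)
Your first two paragraphs are exactly the paper's proof of Theorem~\ref{Pianti}: the equivalence $\ipc{\Pi^{(k)}_{-n}}{\Pi}=0 \iff \ipc{\Pi^{(\tilde k)}_{n}}{\Pi^\perp}=0$ obtained from $(\Pi^{(k)}_{-n})^\perp=\Pi^{(\tilde k)}_n$ together with the fact that orthogonality of two planes passes to their orthogonal complements --- you additionally justify that last fact via the left/right kernels of the Gram matrix, which the paper merely asserts. Your third paragraph raises a point the paper's proof silently skips: the published argument is purely set-level and does not control root multiplicities, so strictly speaking it only shows the constellations are antipodal as sets. Your proposed fixes are both viable --- the cleanest is probably to combine Theorem~\ref{PrincipalC} (the principal constellation is the Majorana constellation of the spin-$s_{\text{max}}$ component) with the observation that the Hodge-type antilinear map $\wedge^k\Hs\to\wedge^{\tilde k}\Hs$ sending $[\mathbf{V}]$ to $[\mathbf{V}^\perp]$ intertwines the $SU(2)$ actions up to complex conjugation, which yields the polynomial identity $P_{\Pi^\perp}(\zeta)\propto\zeta^{k\tilde k}\,\overline{P_\Pi(-1/\bar\zeta)}$ directly --- but as written this step remains a sketch in your proposal, just as it is absent from the paper.
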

\begin{proof}
We denote here explicitly the dimension of the planes by superindices in parentheses. A star $n$ is in the constellation of $\Sigma^{(k)}$ iff $\ipc{\Pi^{(k)}_{-n}}{\Sigma^{(k)}}=0$. If two $k$-planes are orthogonal, their orthogonal complements also are, and $(\Pi^{(k)}_{-n})^\perp=\Pi^{(\tilde{k})}_n$, so that $\ipc{\Pi^{(\tilde{k})}_{n}}{(\Sigma^{(k)})^\perp}=0$, and the assertion follows.
\end{proof}

So far we have specified how to assign to an $(s,k)$-plane a unique constellation of $k\tilde{k}$ stars. The natural question that arises is whether this map is 1-to-1. 
Note that the number of stars in the constellation  coincides with the complex dimension of 
$\text{Gr}_{k,n}$, which sounds encouraging.  However, some experimentation quickly leads to the conclusion that this is not the case.
\begin{myexample}{Two $(\frac{3}{2},2)$-planes with the same principal constellation}{Ts322planes}
Define a generic $(\frac{3}{2},2)$-plane $\Sigma=[\tilde{W}]$ by its standard representative,
\begin{equation}
\label{Wgen}
\tilde{W}=
\left(
\begin{array}{cccc}
1 & 0 & m_{11} & m_{12}
\\
0 & 1 & m_{21} & m_{22}
\end{array}
\right)
\, ,
\end{equation}
and compute its principal polynomial (using~(\ref{PPiVW}), (\ref{Pins32k2sta})),
\begin{equation}
\label{majoPolyWgen}
P_\Sigma(\zeta)=\zeta^4 +2 m_{21} \zeta^3+\sqrt{3}(m_{22}-m_{11}) \zeta^2 -2 m_{12} \zeta
+m_{11}m_{22}-m_{12}m_{21}
\, .
\end{equation}
Take now a particular fourth degree polynomial, say, $\zeta^4-1$, the roots of which define a square on the equator of the Riemann sphere, and set it equal to $P_\Sigma$ to find two solutions
\begin{equation}
\label{twosols32k2}
\tilde{W}_1=
\left(
\begin{array}{cccc}
1 & 0 & i & 0
\\
0 & 1 & 0 & i
\end{array}
\right)
\, ,
\qquad 
\tilde{W}_2=
\left(
\begin{array}{cccc}
1 & 0 & -i & 0
\\
0 & 1 & 0 & -i
\end{array}
\right)
\, ,
\end{equation}
which are actually orthogonal to each other, $\ipc{\tilde{W}_1}{\tilde{W}_2}=0$. 
This is as expected from theorem~\ref{Pianti}, as the constellation considered is self-antipodal. 
\end{myexample}

Further similar computations reveal that, generically, there are 2 $(\frac{3}{2},2)$-planes that share the same  4-star constellation, while, for example,  there are 5 $(2,3)$-planes sharing the same 6-star constellation. Initial attempts to discern a pattern in these numbers were quickly shown hopeless: as we are about to prove, there are, generically, exactly 1,662,804 $(4,4)$-planes sharing the same 20-star constellation, and, for larger $s$, the numbers simply explode. A sense of order is restored by the following
\begin{theorem}
\label{numberofkplanes}
The number $Q(s,k)$ of $(s,k)$-planes that, generically, share the same principal constellation, is given by
\begin{equation}
\label{Nsk}
Q(s,k)
=
\frac{1! \, 2! \,3! \,\dots (k-1)!}{\tilde{k}! \, (\tilde{k}+1)! \dots (2s)!} ( k\tilde{k})! 
\, .
\end{equation}
\end{theorem}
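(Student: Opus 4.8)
The plan is to recognize the assignment $\Pi \mapsto P_\Pi$ as the classical Wronski map and to identify $Q(s,k)$ with its degree, which in turn equals the degree of the Grassmannian in its Pl\"ucker embedding. Via the Majorana correspondence $\ket{\psi}\mapsto P_{\ket{\psi}}(\zeta)$, I would first identify $\mathcal H\cong\mathbb C^{\tilde N}$ with the space of polynomials of degree at most $2s$, so that an $(s,k)$-plane $\Pi$ becomes a $k$-dimensional subspace of that space, i.e. a point of $\text{Gr}_{k,\tilde N}$. By Theorem~\ref{MajoPolyW}, $P_\Pi(\zeta)$ is the Wronskian of any basis of $\Pi$; expanding that determinant shows that each coefficient of $P_\Pi$ is a fixed linear combination of the $k\times k$ minors of the coefficient matrix, that is, a linear form in the Pl\"ucker coordinates of $\Pi$. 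Hence $\Pi\mapsto[P_\Pi]\in\mathbb P^{k\tilde k}$ is the Pl\"ucker embedding followed by a linear projection, a map between two spaces of equal dimension $k\tilde k$, and $Q(s,k)$ is by definition the cardinality of its generic fibre.

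Next I would show that this Wronski map is a finite surjective morphism whose degree is the Pl\"ucker degree of $\text{Gr}_{k,\tilde N}$. The center of the linear projection is the locus where every coefficient of the Wronskian vanishes, i.e. where the Wronskian vanishes identically; but the Wronskian of a linearly independent family of polynomials is never identically zero, so the center is disjoint from the Pl\"ucker image of the Grassmannian. A linear projection of a projective variety $X$ of dimension $m$ into $\mathbb P^m$ from a center disjoint from $X$ is finite and surjective of degree $\deg X$, and in characteristic zero its generic fibre is reduced; therefore $Q(s,k)=\deg\text{Gr}_{k,\tilde N}$. Equivalently, and more in the spirit of the earlier sections, for a target polynomial $p$ with $k\tilde k$ distinct roots $\zeta_1,\dots,\zeta_{k\tilde k}$, Theorem~\ref{starspsiPi} lets me rewrite the fibre as $\{\Pi:\Pi\cap F^{(k)}(\zeta_i)\neq 0,\ i=1,\dots,k\tilde k\}$, where $F^{(k)}(\zeta_i)$ is the $\tilde k$-dimensional space of polynomials with a $k$-fold root at $\zeta_i$; each condition is a Schubert divisor in the Pl\"ucker hyperplane class $\sigma_1$, so the fibre is the intersection of $k\tilde k$ such divisors, with count $\sigma_1^{k\tilde k}$.

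It then remains to evaluate $\sigma_1^{k\tilde k}=\deg\text{Gr}_{k,\tilde N}$. By Pieri's rule, multiplication by $\sigma_1$ in the cohomology ring of the Grassmannian adds a single box to a Young diagram inscribed in the $k\times\tilde k$ rectangle, so the top power $\sigma_1^{k\tilde k}$ counts the saturated chains from the empty diagram to the full rectangle, i.e. the number of standard Young tableaux of rectangular shape $k\times\tilde k$. Applying the hook-length formula to the rectangle gives
\[
\deg\text{Gr}_{k,\tilde N}=\frac{(k\tilde k)!\,0!\,1!\cdots(k-1)!}{\tilde k!\,(\tilde k+1)!\cdots(\tilde k+k-1)!}\, ,
\]
and since $\tilde k+k-1=2s$ this is exactly the expression for $Q(s,k)$ in~(\ref{Nsk}). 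As checks, it yields $\deg\text{Gr}_{2,4}=2$, $\deg\text{Gr}_{3,5}=5$, and $\deg\text{Gr}_{4,9}=1{,}662{,}804$, matching the cases quoted just before the theorem.

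The main obstacle I anticipate is not the combinatorial evaluation but the justification that the \emph{generic} fibre is reduced and of the expected cardinality, equivalently that the osculating Schubert divisors $\sigma_1(F^{(k)}(\zeta_i))$ meet transversally for generic $\zeta_i$, so that the scheme-theoretic number $\sigma_1^{k\tilde k}$ is realized by that many distinct planes. Transversality for special (osculating) flags rather than fully generic ones is precisely the delicate point studied in the theory of the Wronski map. The cleanest way to sidestep it is the disjoint-center projection argument of the second paragraph, which delivers finiteness and the degree $\deg\text{Gr}_{k,\tilde N}$ directly, together with generic smoothness in characteristic zero to guarantee that the generic fibre is reduced.
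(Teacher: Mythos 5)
Your argument is correct, and it takes a genuinely different route from the paper's. The paper reduces the count, via the observation in the proof of Theorem~\ref{starspsiPi}, to the enumerative problem of $k$-planes meeting the $k\tilde k$ coherent $\tilde k$-planes $\Pi^{(\tilde k)}_{n_i}$, and then simply cites Schubert's classical formula for the number of $k$-planes meeting $k\tilde k$ \emph{general} $\tilde k$-planes in $\mathbb{C}^{\tilde N}$. You instead identify $\Pi\mapsto[P_\Pi]$ as the Wronski map, realize it as the Pl\"ucker embedding followed by a linear projection whose center misses the Grassmannian (a Wronskian of linearly independent polynomials cannot vanish identically), conclude that the map is finite of degree $\deg\mathrm{Gr}_{k,\tilde N}=\sigma_1^{k\tilde k}$ with reduced generic fibre in characteristic zero, and evaluate this degree by Pieri's rule and the hook-length formula as the number of rectangular standard Young tableaux, recovering~(\ref{Nsk}). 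Each approach buys something: the paper's is shorter, but, as you rightly flag, it glosses over a genuine subtlety --- the $\Pi^{(\tilde k)}_{n_i}$ are osculating planes of the coherent-state rational normal curve, hence \emph{not} in general position, so Schubert's generic count does not apply verbatim; that the count nevertheless holds for generic $n_i$ is precisely the nontrivial content of the degree of the Wronski map (Eisenbud--Harris, Eremenko--Gabrielov). Your disjoint-center projection argument disposes of this cleanly and makes the proof essentially self-contained, at the cost of invoking somewhat more algebraic geometry (finiteness of projections from a disjoint center, generic smoothness). Your numerical checks ($2$, $5$, $1{,}662{,}804$) agree with the values quoted in the paper.
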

\begin{proof}
As shown in the proof of theorem~\ref{starspsiPi}, if a star $n$ is in the constellation of $\Pi^{(k)}$, then there exists a state $\ket{\psi} \in \Pi^{(k)}$ that also belongs to $(\Pi^{(k)}_{-n})^\perp=\Pi^{(\tilde{k})}_n$, \ie, $\Pi$ intersects (nontrivially) $\Pi^{(\tilde{k})}_n$. Then, given the $k\tilde{k}$ stars $n_i$ of the constellation of $\Pi$, the number of $k$-planes that share that same constellation is the number of $k$-planes that intersect (nontrivially) the $k\tilde{k}$ $\tilde{k}$-planes $\Pi^{(\tilde{k})}_{n_i}$. This number has been shown by Schubert~\cite{Sch:79a} to be equal, generically,  to $Q(s,k)$ above (see also \cite{Gri.Har:78} or Ch.{} XIV of \cite{Hod.Ped:52} for a modern treatment).
\end{proof}
Note that the result applies to the generic case --- particular constellations might have fewer corresponding $k$-planes, for instance,  the tetrahedral constellation in Example~\ref{Ts322planes} has only one corresponding 2-plane, rather than two (= $Q(3/2,2)$). What transpires in these cases is that as one approaches the constellation in question, two or more corresponding planes approach each other, and become identical right on the constellation. Thus, if the planes are counted with multiplicities, their number is always $Q(s,k)$ above. In any case, theorem~\ref{numberofkplanes} makes it clear that the principal constellation of an $(s,k)$-plane, as defined above, does not uniquely characterize that plane ---  it turns out that what is missing is more constellations.
\section{$(s,k)$-plane Multiconstellations via the Pl\"ucker Embedding}
\label{kpMvtPE}
\subsection{The $SU(2)$ action on $\wedge^k \mathcal{H}$}
\label{TSao}
A spin-$s$ quantum state $\ket{\psi}$ lives in the Hilbert space $\mathcal{H}=\mathbb{C}^{\tilde{N}}$ --- its image in the projective space $\mathbb{P}^N$ will be denoted by $\pket{\psi}$. We transcribe the general notation we have used so far to the case at hand: vectors are denoted by kets, like $\ket{\psi}$, and $k$-planes in $\mathcal{H}$ can be described as collections of vectors, $\{ \ket{\psi_1}, \ldots,\ket{\psi_k} \}$, the $k \times n$ matrix $\Psi$ of their components, or their wedge product $\ket{\boldsymbol{\Psi}}$, itself a vector in $\wedge^k \mathcal{H}$, $\ket{\boldsymbol{\Psi}}=\sum_{\rha{I}}\Psi^{\rha{I}} e_{\rha{I}}$.

Rotations in physical space are represented by the action of $SU(2)$ on $\mathcal{H}$ via $g \mapsto D^{(s)}(g)$, where $D^{(s)}$ is the $\tilde{N}$-dimensional irreducible representation of $SU(2)$, \ie, under a rotation $g$, the column vector $\ket{\psi}$ transforms by left multiplication by $D^{(s)}(g)$, $\ket{\psi} \mapsto D^{(s)}(g) \ket{\psi}$.  This representation extends naturally to $\wedge^k \mathcal{H}$ by tensoring up,
\begin{equation}
\label{repwedgek}
\ket{\boldsymbol{\Psi}}=
\ket{\psi_1} \wedge \ldots \wedge \ket{\psi_k} \mapsto D^{(s)}(g) \ket{\psi_1} \wedge \ldots \wedge D^{(s)}(g) \ket{\psi_k}
\equiv
D^{(s,k)}(g) \ket{\boldsymbol{\Psi}}
\, ,
\end{equation}
where $D^{(s,k)}(g)$, \ie, the totally antisymmetric part of the $k$-th tensor power of $D^{(s)}(g)$, provides a $\binom{\tilde{N}}{k}$-dimensional representation of $SU(2)$ on $\wedge^k\mathcal{H}$ and, with a slight abuse of notation, $\ket{\boldsymbol{\Psi}}$ on the right hand side stands for the column vector of the components of $\ket{\psi_1}\wedge \ldots \wedge \ket{\psi_n}$ in the \emph{Pl\"ucker basis} of the $e_{\rha{I}}$'s. This representation is not, in general, irreducible: when brought in block-diagonal form, by a suitable change of basis in $\wedge^k \mathcal{H}$, from the Pl\"ucker to the Block Diagonal (BD) one, $D^{(s,k)}$ contains $m^{(s,k)}_j$ copies of $D^{(j)}$, $j=0, \ldots, s_{\text{max}}$. We turn now to the determination of the BD basis, as well as of $s_{\text{max}}$ and the multiplicities $m^{(s,k)}_j$.

As mentioned above, $SU(2)$ acts on wedge products by its tensor powers, giving rise to the representation $D^{(s,k)}$. At the Lie algebra level, this implies that the generators $S_a$, $a=1,2,3$, act as derivations, \ie, by following Leibniz' rule, which results in representation matrices $S^{(s,k)}_a$, satisfying the $\mathfrak{su}(2)$ algebra, and generating $D^{(s,k)}$ by exponentiation,
\begin{equation}
\label{Sskdef}
S^{(s,k)}_a=i \frac{\partial}{\partial t} D^{(s,k)}(e^{-i t S_a})|_{t=0}
\, .
\end{equation}
As a result, a wedge product of, say, $S^{(s)}_z$ eigenvectors, is a $S^{(s,k)}_z$ eigenvector, with eigenvalue equal to the sum of the eigenvalues of the factors. Thus, the ``top'' $(s,k)$-plane 
$\ket{s,s} \wedge \ket{s,s-1} \wedge \ldots \wedge \ket{s,s-(k-1)}$ attains the maximal $S^{(s,k)}_z$ eigenvalue, which is also the maximal value of the spin $j$ in the decomposition of $D^{(s,k)}$, equal to 
\begin{equation}
\label{smaxdef}
s_\text{max}=s+ (s-1)+ \ldots +(s-(k-1))= \frac{1}{2} k \tilde{k}
\, .
\end{equation} 
We denote the above $(s,k)$-plane by $\ket{s_\text{max},s_\text{max}}$. Looking for $(s,k)$-planes with $S^{(s,k)}_z$-eigenvalue equal to $s_\text{max}-1$, we find only one, $\ket{s_\text{max},s_\text{max}-1}=\ket{s,s}\wedge \ldots \wedge \ket{s,s-(k-2)} \wedge \ket{s, s-k}$, which is obtained from $\ket{s_\text{max},s_\text{max}}$ by applying the lowering operator $S^{(s,k)}_-$, \ie, it belongs to the same irreducible representation. Going one step down, one finds two new eigenvectors with eigenvalue $s_\text{max}-2$. A linear combination of them is obtained as $S^{(s,k)}_- \ket{s_\text{max},s_\text{max}-1}$, while the othogonal combination serves as the heighest weight vector of a $j=s_\text{max}-2$ irreducible multiplet. We conclude that for all $s$, $k$, the representations with $j=s_\text{max}$ and $j=s_\text{max}-2$ appear with multiplicity 1, while $j=s_\text{max}-1/2$, $s_\text{max}-1$, $s_\text{max}-3/2$ never appear. Continuing in the same way, one may construct the $\wedge^k \mathcal{H}$-basis that block diagonalizes $D^{(s,k)}$. If, however, only the multiplicities $m^{(s,k)}_j$ are desired, one may employ the standard character machinery~\cite{Lit:50,Ful.Har:04}.
Thus, one first computes  the character 
\begin{align}
\label{charsk}
\chi^{(s,k)}(\alpha)
&
\equiv
\Tr \, D^{(s,k)}(R_{\hat{n},\alpha})
=
 \sum_{j=0}^{s_{\text{max}}} m^{(s,k)}_j \chi^{(j)}(\alpha)
 \, ,
 \end{align}
where $\hat{n}$ denotes the rotation axis, and $\alpha$  the rotation angle, and $\chi^{(j)}$ are the irreducible characters, 
\begin{equation}
\label{irredjchar}
\chi^{(j)}(\alpha) \equiv \Tr \, D^{(j)}(R_{\hat{n},\alpha}) = \frac{\sin((j+\frac{1}{2})\alpha)}{\sin \frac{\alpha}{2}}
\, .
\end{equation}
Then the orthonormality of the irreducible characters is invoked, $\ipp{\chi^{(m)}}{\chi^{(n)}}=\delta_{mn}$, where
\begin{equation}
\label{ippdef}
\ipp{f}{h}
\equiv
\frac{1}{\pi}\int_0^{2\pi} \td \alpha \, \sin^2 \frac{\alpha}{2} \bar{f}(\alpha) h(\alpha)
\, ,
\end{equation}
to get for the multiplicities
\begin{equation}
\label{multires}
m^{(s,k)}_j=\frac{1}{\pi} \int_0^{2\pi} \td \alpha \,  \sin^2 \frac{\alpha}{2} \chi^{(s,k)}(\alpha) \chi^{(j)}(\alpha)
\, .
\end{equation}
The characters $\chi^{(s,k)}$ satisfy the recursion formula
\begin{equation}
\label{chiskrec}
\chi^{(s,k)}(\alpha)=
\frac{1}{k} \sum_{m=1}^k (-1)^{m-1} \chi^{(s)}(m\alpha) \chi^{(s,k-m)}(\alpha)
\, ,
\end{equation}
with $\chi^{(s,0)}(\alpha)\equiv 1$, giving, for example,
\begin{align}
\chi^{(s,2)}(\alpha)
&=
\frac{1}{2} 
\left(
\chi^{(s)}(\alpha)^2-\chi^{(s)}(2\alpha)
\right)
\label{chis2}
\\
\chi^{(s,3)}(\alpha)
&=
\frac{1}{6} 
\left(
\chi^{(s)}(\alpha)^3-3\chi^{(s)}(\alpha) \chi^{(s)}(2\alpha)+2\chi^{(s)}(3\alpha)
\right)
\label{chis3}
\\
\chi^{(s,4)}(\alpha)
&=
\frac{1}{24}
\left(
\chi^{(s)}(\alpha)^4
-6\chi^{(s)}(\alpha)^2\chi^{(s)}(2\alpha)
+3\chi^{(s)}(2\alpha)^2
+8\chi^{(s)}(\alpha)\chi^{(s)}(3\alpha)
-6\chi^{(s)}(4\alpha)
\right)
\, .
\label{chis4}
\end{align}
A general solution for the recursion~(\ref{chiskrec}) can be found, using standard representation theory machinery. Call $\lambda_m \equiv e^{i m\alpha}$, $-s\leq m \leq s$, the eigenvalues of $D^{(s)}(R_{n,\alpha})$. Then the eigenvalues of $D^{(s,k)}(R_{n,\alpha})$ are the products $\lambda_{m_1} \ldots \lambda_{m_k}$, with $m_1 < \ldots < m_k$, so that $\chi^{(s,k)}(\alpha)=\sum_{m_1 < \ldots < m_k} \lambda_{m_1} \ldots \lambda_{m_k} \equiv E_k (\boldsymbol{\lambda})$, where $E_k(\boldsymbol{\lambda})$ is the $k$-th elementary symmetric polynomial in the $2s+1$ variables $\boldsymbol{\lambda}=\{\lambda_m \}$. The latter can be expressed in terms of the Newton (or power sum) polynomials $P_r(\boldsymbol{\lambda}) = \sum_{m=-s}^s \lambda_m^r=\chi^{(s)}(r\alpha)$. To this end, given a $k$-tuple of non-negative integers $M=(m_1,\ldots,m_k)$, satisfying $\sum_{r=1}^k r m_r=k$, define the homogeneous, degree-$k$ polynomial  $P^{(M)}\equiv P_1^{m_1} \ldots P_k^{m_k}$, in terms of which (see, \eg, appendix A of~\cite{Ful.Har:04})
\begin{equation}
\label{recchi}
E_k
=
\sum_M \frac{(-1)^{k-\bar{M}}}{z(M)} P^{(M)}
\, ,
\end{equation}
where
\begin{equation}
\label{barMzM}
\bar{M}=\sum_{r=1}^k m_r
\, ,
\qquad
z(M)= m_1  ! \, 1^{m_1} m_2  ! \, 2^{m_2} \ldots m_k  ! \, k^{m_k}
\, ,
\end{equation}
so that
\begin{equation}
\label{recchi2}
\chi^{(s,k)}(\alpha)
=
\sum_M \frac{(-1)^{k-\bar{M}}}{z(M)} \left( \chi^{(s)}(\alpha) \right)^{m_1}   \left( \chi^{(s)}(2\alpha) \right)^{m_2}
\ldots  \left( \chi^{(s)}(k\alpha) \right)^{m_k}
\, .
\end{equation}
 For example, when $k=4$, the possible values of $M$ in~(\ref{recchi}) are $(4,0,0,0)$, $(2,1,0,0)$, $(0,2,0,0)$, $(1,0,1,0)$, and $(0,0,0,1)$, each of which gives rise to one of the five terms in the \rhs{} of~(\ref{chis4}).

Using these expressions, and~(\ref{irredjchar}), in~(\ref{multires}) one may compute any desired multiplicity $m^{(s,k)}_j$. Note that only integer (half-integer) values of $j$ need be considered in~(\ref{multires})  when $s_\text{max}$ is integer (half-integer). It is also clear that $s_\text{max}$ is half-integer only when $s$ is, and $k$ is odd.

The above method for obtaining the multiplicities is the standard one, but quickly becomes inefficient due to the integration in~(\ref{multires}). A much more efficient way to produce the $m^{(s,k)}_j$, based on a  combinatorial formula, is given by the following theorem~\cite{Pol.Sfe:16}.
\begin{theorem}
\label{mjskcomb}
The multiplicity $m_j^{(s,k)}$ in the \rhs{} of~(\ref{charsk}) is given by the coefficient of $x^j$, $0 \leq j \leq s_{\text{max}}$, in the Laurent expansion, around $x=0$, of the function
\begin{equation}
\label{genfunmskj}
\zeta_{s,k}(x)=(1-x^{-1}) \prod_{r=1}^k \frac{x^{s+1}-x^{r-s-1}}{x^r-1}
\, .
\end{equation}
\end{theorem}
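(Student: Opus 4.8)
The plan is to reduce the statement to a single algebraic identity for the character $\chi^{(s,k)}$ together with an elementary inversion of the Clebsch--Gordan sum in~(\ref{charsk}). The starting point is the identity, already established in the text preceding the theorem, that with $x\equiv e^{i\alpha}$ and $\lambda_m=x^m$ ($-s\le m\le s$) one has $\chi^{(s,k)}(\alpha)=E_k(\boldsymbol{\lambda})$, the $k$-th elementary symmetric polynomial in the $2s+1$ eigenvalues $\lambda_m$. Two things then need to be shown: first, that $E_k(\boldsymbol{\lambda})$ equals the product $\prod_{r=1}^k (x^{s+1}-x^{r-s-1})/(x^r-1)$ appearing in $\zeta_{s,k}$; and second, that multiplication of a character by the factor $(1-x^{-1})$ extracts precisely the multiplicities $m_j^{(s,k)}$ as Laurent coefficients.

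For the first step I would use the generating function $\prod_{m=-s}^s(1+\lambda_m t)=\sum_{k} E_k(\boldsymbol{\lambda})\,t^k$ and the observation that the variables $\lambda_m=x^m$ form a geometric progression: writing $\lambda_m=x^{-s}x^{m+s}$ turns them into $1,x,\dots,x^{2s}$. The Gauss ($q$-binomial) theorem then gives
\begin{equation}
E_k(\boldsymbol{\lambda})=x^{\binom{k}{2}-sk}\binom{2s+1}{k}_{x}
=x^{\binom{k}{2}-sk}\prod_{r=1}^k\frac{x^{2s+1-k+r}-1}{x^r-1}\,.
\end{equation}
Factoring $x^{r-s-1}$ out of each numerator of the target product, $x^{s+1}-x^{r-s-1}=x^{r-s-1}(x^{2s+2-r}-1)$, and summing the resulting exponents, $\sum_{r=1}^k(r-s-1)=\binom{k}{2}-sk$, shows that the collected prefactor matches and that the two sets of numerator factors coincide after the reversal $r\mapsto k+1-r$. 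Hence $\chi^{(s,k)}(x)=\prod_{r=1}^k(x^{s+1}-x^{r-s-1})/(x^r-1)$; although written as a product of rational functions with apparent poles at roots of unity, this is in fact a Laurent polynomial, so its Laurent expansion around $x=0$ is unambiguous.

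For the second step I would use $\chi^{(j)}(x)=\sum_{m=-j}^{j}x^m$. Writing the (symmetric) Laurent polynomial as $\chi^{(s,k)}(x)=\sum_n a_n x^n$, the decomposition $\chi^{(s,k)}=\sum_j m^{(s,k)}_j\chi^{(j)}$ gives, for $n\ge0$, the partial sum $a_n=\sum_{j\ge n}m^{(s,k)}_j$, whence $m^{(s,k)}_n=a_n-a_{n+1}$. This difference is realized by the telescoping multiplication $(1-x^{-1})\sum_n a_n x^n=\sum_n(a_n-a_{n+1})x^n$, so that the coefficient of $x^j$ in $(1-x^{-1})\chi^{(s,k)}(x)$ equals $m^{(s,k)}_j$ for $0\le j\le s_{\text{max}}$. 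Combining this with the product formula of the previous step gives $(1-x^{-1})\chi^{(s,k)}(x)=\zeta_{s,k}(x)$, and reading off the coefficient of $x^j$ yields the claim; one only checks the trivial parity point that for a given $(s,k)$ the relevant $j$ share the integer/half-integer character of $s_{\text{max}}$.

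The technical heart, and the step I expect to require the most care, is the product formula of the second paragraph: recognizing $E_k$ evaluated on a geometric progression as a power of $x$ times a Gaussian binomial coefficient, and matching the normalizing prefactor $x^{\binom{k}{2}-sk}$ exactly. The inversion in the third paragraph is essentially bookkeeping, its only subtlety being the observation that $\zeta_{s,k}$ must be treated as a genuine Laurent polynomial so that extracting the coefficient of $x^j$ is well defined.
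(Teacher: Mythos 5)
Your proof is correct, and it is genuinely different from what the paper does: the paper offers no argument of its own for this theorem, deferring entirely to the partition-function derivation in the cited reference of Polychronakos and Sfetsos (there the weight-space dimensions of $\wedge^k\mathcal{H}$ are counted as restricted partitions, whose generating function is a Gaussian binomial coefficient). Your route reaches the same Gaussian binomial by a different door: you take the identity $\chi^{(s,k)}=E_k(\boldsymbol{\lambda})$ already established in the text, observe that the eigenvalues form a geometric progression, and invoke the $q$-binomial (Gauss) theorem to get $E_k(\boldsymbol{\lambda})=x^{\binom{k}{2}-sk}\binom{2s+1}{k}_x$; your bookkeeping of the prefactor ($\sum_{r=1}^k(r-s-1)=\binom{k}{2}-sk$) and the reindexing $r\mapsto k+1-r$ of the numerator factors both check out, as does the telescoping step, since multiplying $\sum_n a_n x^n$ by $(1-x^{-1})$ produces $\sum_n (a_n-a_{n+1})x^n$ and $a_n=\sum_{j\ge n}m_j^{(s,k)}$ because each $\chi^{(j)}=\sum_{m=-j}^{j}x^m$ contributes $1$ to the coefficient of $x^n$ exactly when $j\ge n$ (within the single parity class that occurs). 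What your approach buys is a short, self-contained, purely algebraic proof that makes the paper's claim verifiable without consulting the reference; what the partition-counting route buys is a direct combinatorial interpretation of the weight multiplicities (as numbers of partitions fitting in a $k\times\tilde{k}$ box), which connects more naturally to the Schubert-calculus themes elsewhere in the paper. The only point deserving one explicit sentence in a final write-up is the half-integer case: when $s$ is half-integer and $k$ odd, $\zeta_{s,k}$ is a Laurent polynomial in $x^{1/2}$, and the telescoping argument runs unchanged on the half-integer lattice of exponents.
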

\begin{proof}
A nice proof, based on partition function methods, can be consulted in~\cite{Pol.Sfe:16}. 
\end{proof}
We note in passing that the simple pattern that can be discerned  from Table~\ref{multi234} for $k=2$, \ie, that the series starts at $s_{\text{max}}$ and descends in steps of two, all multiplicities being one, can be shown to hold indeed true for all $s$  --- see, \eg, exercise 6.16 in~\cite{Ful.Har:04}. Already for $k=3$, the relatively tame sample in Table~\ref{multi234} does little justice to the subtle \emph{follie} unravelled, \eg, in Fig.~\ref{fig:m1003j}, where $s=40$ (left) and $s=100$ (right).
\begin{figure}[h]
\includegraphics[width=.48\linewidth]{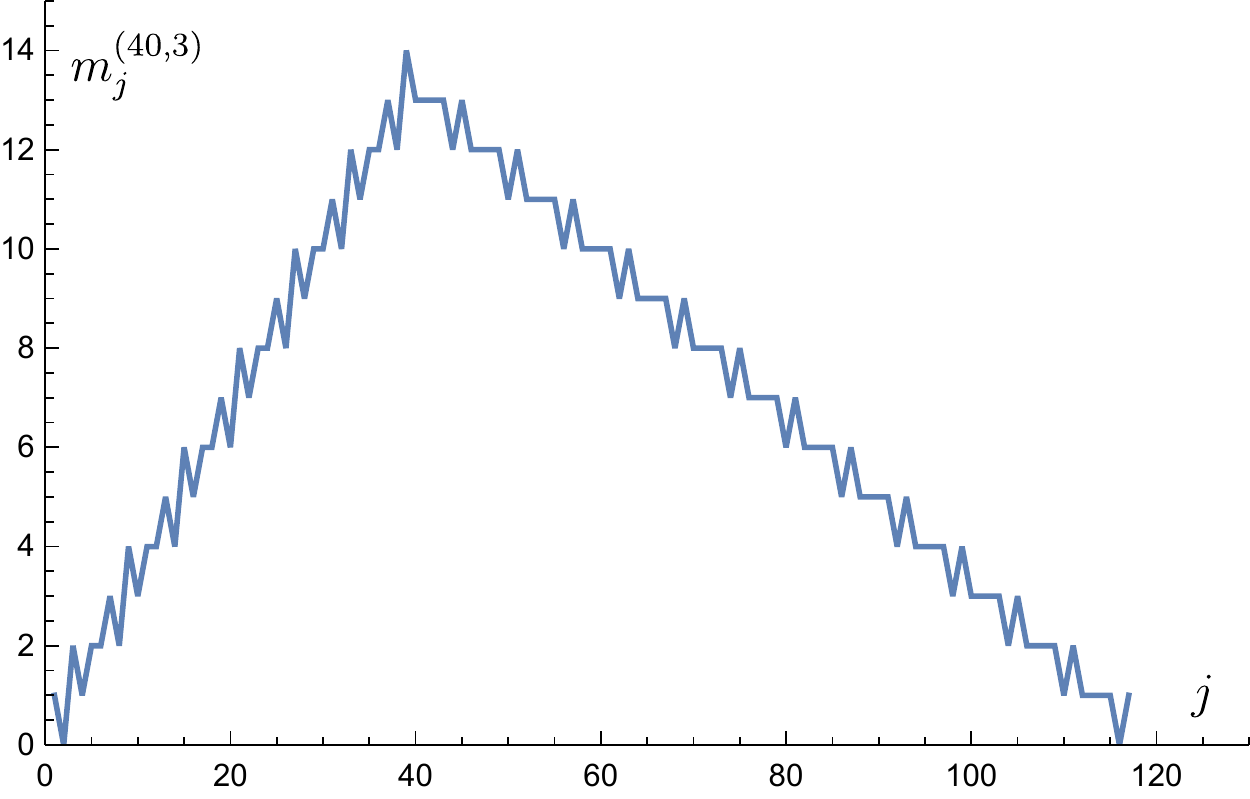}
\includegraphics[width=.48\linewidth]{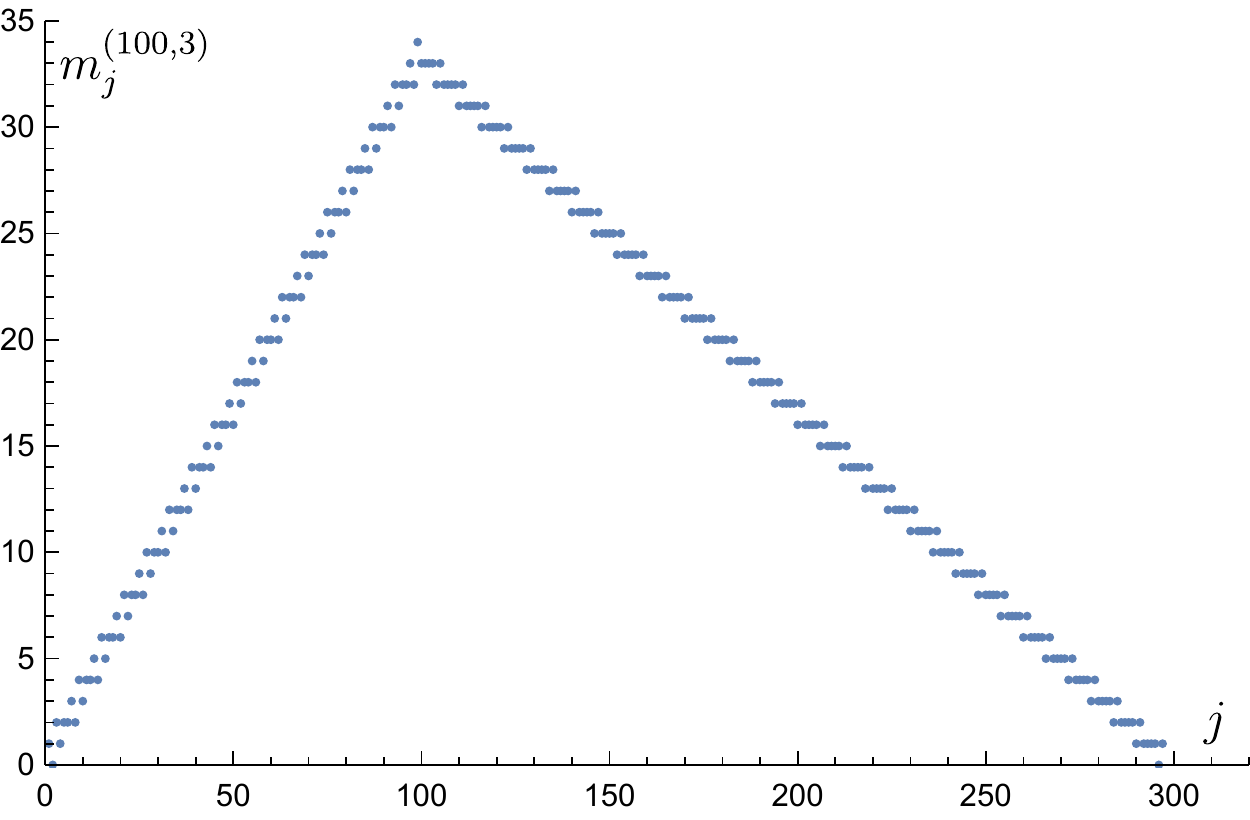}
\caption{%
Multiplicities  $m^{(40,3)}_j$ (left), $m^{(100,3)}_j$ (right), \emph{vs.} $j$,  as given by the Laurent expansion of the \rhs{} of~(\ref{genfunmskj}). In the figure on the left, consecutive points have been joined by straight  line segments to draw attention to the (locally) non-monotonic behavior of the $m$'s, which might, otherwise, pass unnoticed (a local ``period'' of 4 is easily discerned).
}
\label{fig:m1003j}
\end{figure}
\begin{figure}[h]
\includegraphics[width=.48\linewidth]{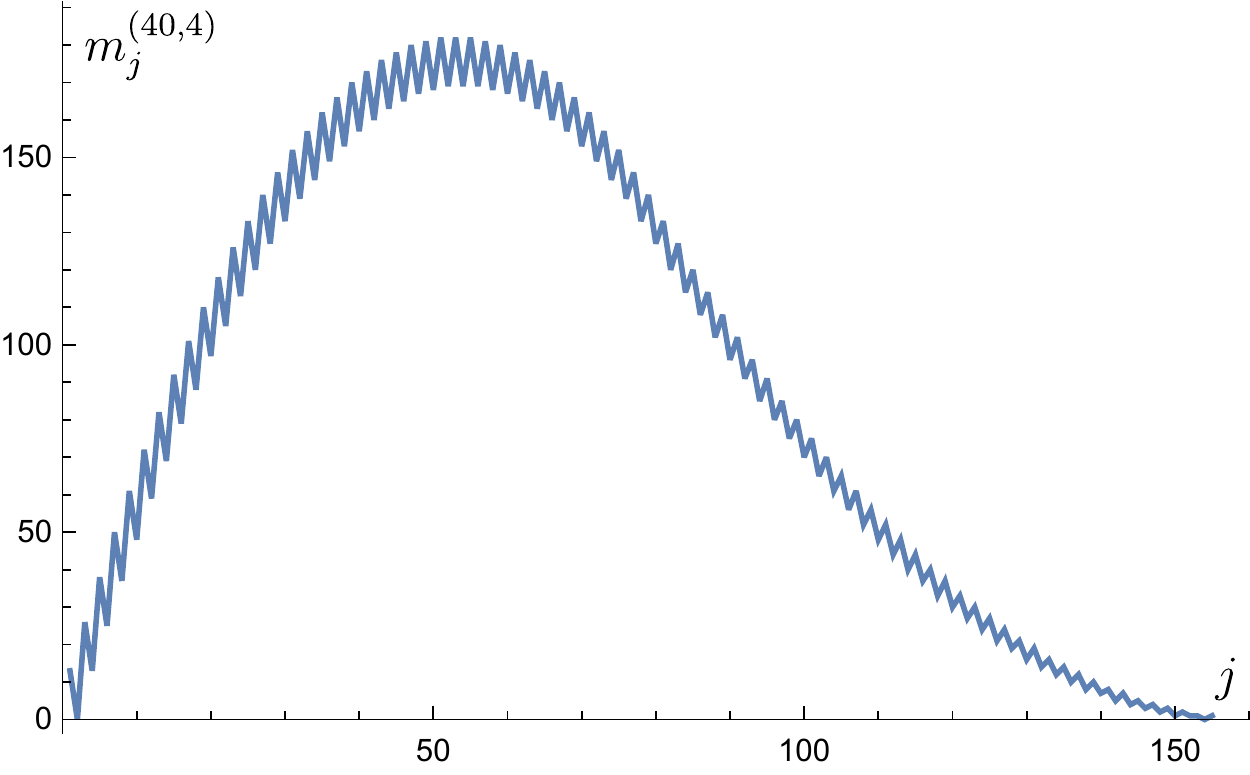}
\includegraphics[width=.48\linewidth]{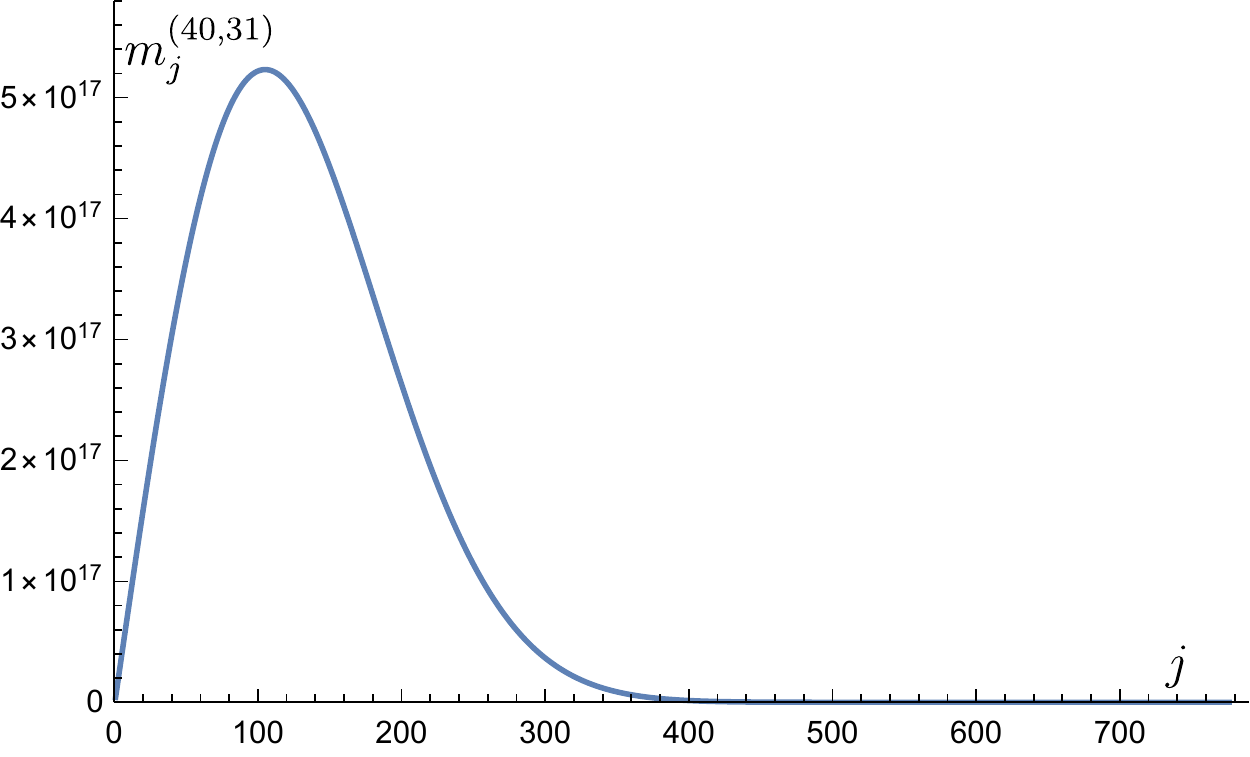}
\caption{%
Multiplicities  $m^{(40,4)}_j$ (left), $m^{(40,31)}_j$ (right), \emph{vs.} $j$,  as given by the Laurent expansion of the \rhs{} of~(\ref{genfunmskj}). In the plot on the left, the even and odd values of $j$ clearly follow, each, their own curve. Compare that plot with the one on the left in Fig.~\ref{fig:m1003j}. Pay also attention to the vertical scale in the plot on the right.
}
\label{fig:m4031j}
\end{figure}

Material related to the one presented above, concerning the multiplicities of the irreducible components of the $n$-fold tensor product of the spin-$s$ representation of $SU(2)$, can be found in~\cite{Zac:92,Cur.Kor.Zac:17}, while an enumerative combinatoric approach that rederives the above result, among many others, is undertaken in~\cite{Gya.Bar:18}. Note that the above problem of determining the irreducible components of the $k$-fold wedge product of a spin-$s$ representation is a special case of the general \emph{plethysm} problem (see p.{} 289 of~\cite{Lit:50}), which remains open to this day.
 
\begin{table}
\begin{tabular}{| c | c | c | c | c | c |c | c | c | c | c | c | c | c | c | c | c | c | c |}
\hline
& 
$\phantom{\rule[-1.7ex]{.5ex}{4.6ex}}$$\mathbf{j}$$\phantom{\rule[-.2ex]{.5ex}{2.5ex}}$
& 
$\phantom{\rule[-.2ex]{.5ex}{2.5ex}}$$\mathbf{0}$$\phantom{\rule[-.2ex]{.5ex}{2.5ex}}$ 
&
$\phantom{\rule[-.2ex]{.5ex}{2.5ex}}$$\mathbf{\frac{1}{2}}$$\phantom{\rule[-.2ex]{.5ex}{2.5ex}}$ 
&
$\phantom{\rule[-.2ex]{.5ex}{2.5ex}}$$\mathbf{1}$$\phantom{\rule[-.2ex]{.5ex}{2.5ex}}$ 
& 
$\phantom{\rule[-.2ex]{.5ex}{2.5ex}}$$\mathbf{\frac{3}{2}}$$\phantom{\rule[-.2ex]{.5ex}{2.5ex}}$
&
$\phantom{\rule[-.2ex]{.5ex}{2.5ex}}$$\mathbf{2}$$\phantom{\rule[-.2ex]{.5ex}{2.5ex}}$
&
$\phantom{\rule[-.2ex]{.5ex}{2.5ex}}$$\mathbf{\frac{5}{2}}$$\phantom{\rule[-.2ex]{.5ex}{2.5ex}}$
& 
$\phantom{\rule[-1.7ex]{.5ex}{4.6ex}}$$\mathbf{3}$$\phantom{\rule[-.2ex]{.5ex}{2.5ex}}$
&
$\phantom{\rule[-.2ex]{.5ex}{2.5ex}}$$\mathbf{\frac{7}{2}}$$\phantom{\rule[-.2ex]{.5ex}{2.5ex}}$ 
&
$\phantom{\rule[-.2ex]{.5ex}{2.5ex}}$$\mathbf{4}$$\phantom{\rule[-.2ex]{.5ex}{2.5ex}}$ 
& 
$\phantom{\rule[-.2ex]{.5ex}{2.5ex}}$$\mathbf{\frac{9}{2}}$$\phantom{\rule[-.2ex]{.5ex}{2.5ex}}$
&
$\phantom{\rule[-.2ex]{.5ex}{2.5ex}}$$\mathbf{5}$$\phantom{\rule[-.2ex]{.5ex}{2.5ex}}$
&
$\phantom{\rule[-.2ex]{.5ex}{2.5ex}}$$\mathbf{\frac{11}{2}}$$\phantom{\rule[-.2ex]{.5ex}{2.5ex}}$
&
$\phantom{\rule[-.2ex]{.5ex}{2.5ex}}$$\mathbf{6}$$\phantom{\rule[-.2ex]{.5ex}{2.5ex}}$
& 
$\phantom{\rule[-.2ex]{.5ex}{2.5ex}}$$\mathbf{\frac{13}{2}}$$\phantom{\rule[-.2ex]{.5ex}{2.5ex}}$
&
$\phantom{\rule[-.2ex]{.5ex}{2.5ex}}$$\mathbf{7}$$\phantom{\rule[-.2ex]{.5ex}{2.5ex}}$
&
$\phantom{\rule[-.2ex]{.5ex}{2.5ex}}$$\mathbf{\frac{15}{2}}$$\phantom{\rule[-.2ex]{.5ex}{2.5ex}}$
&
$\phantom{\rule[-.2ex]{.5ex}{2.5ex}}$$\mathbf{8}$$\phantom{\rule[-.2ex]{.5ex}{2.5ex}}$
\\ 
\hline
$\phantom{\rule[-1.5ex]{.5ex}{4.0ex}}$$\mathbf{s}$$\phantom{\rule[-1.7ex]{.5ex}{4.3ex}}$
& 
$\mathbf{k}$
& & & & & & & & & & & & & & & & & 
\\
\cline{1-5}
$\phantom{\rule[-1.5ex]{.5ex}{4.0ex}}$$\mathbf{1}$ & $\mathbf{2}$ & 0 &   & 1  &  &   & & & & & & & & & & & & 
\\
\cline{1-7}
$\phantom{\rule[-1.5ex]{.5ex}{4.0ex}}$$\mathbf{\frac{3}{2}}$ & $\mathbf{2}$ & 1 &   & 0  &  & 1  & & & & & & & & & & & & 
\\
\cline{1-9}
$\phantom{\rule[-1.5ex]{.5ex}{4.0ex}}$ $\mathbf{2}$ & $\mathbf{2}$ & 0  &  & 1 &  & 0 &   & 1  &  & & & & & & & & & 
\\
\cline{1-11}
$\phantom{\rule[-1.5ex]{.5ex}{4.0ex}}$ $\mathbf{\frac{5}{2}}$ & $\mathbf{2}$  & 1 &  & 0 &  & 1 &  & 0 &  & 1 &  &  &  &   &  &  &  &
\\
\cline{2-12}
$\phantom{\rule[-1.5ex]{.5ex}{4.0ex}}$& $\mathbf{3}$  &  & 0 &  & 1 &  & 1 &  & 0 &  & 1 &  &  &   &  &  &  &
\\
\cline{1-13}
$\phantom{\rule[-1.5ex]{.5ex}{4.0ex}}$ $\mathbf{3}$ & $\mathbf{2}$  & 0 &  & 1 &  & 0 &  & 1 &  & 0 &  & 1 &  &   &  &  &  &
\\
\cline{2-15}
$\phantom{\rule[-1.5ex]{.5ex}{4.0ex}}$& $\mathbf{3}$  & 1 &  & 0 &  & 1 &  & 1 &  & 1 &  & 0 &  & 1  &  &  &  &
\\
\cline{1-15}
$\phantom{\rule[-1.5ex]{.5ex}{4.0ex}}$$\mathbf{\frac{7}{2}}$ &  $\mathbf{2}$ & 1 &  & 0 &  & 1 &  & 0 &  & 1 &  & 0 &  & 1 &  &  &  &
\\
\cline{2-18}
$\phantom{\rule[-1.5ex]{.5ex}{4.0ex}}$ & $\mathbf{3}$ &  & 0 &  & 1 &  & 1 &  & 1 &  & 1 &  & 1 &    & 0 &  & 1 &
\\
\cline{2-19}
$\phantom{\rule[-1.5ex]{.5ex}{4.0ex}}$& $\mathbf{4}$ & 1 &  & 0 &  & 2 &  & 0 &  & 2 &  & 1 &  & 1 &  & 0 &  & 1
\\
\hline
\end{tabular}
\caption{Multiplicities $m^{(s,k)}_j$ of irreducible components of $D^{(s,k)}$, as given by~(\ref{multires}), and~(\ref{chis2})--(\ref{chis4}), (\ref{recchi}), or, alternatively, by~(\ref{genfunmskj}). Only values of $j$ such that $2j$ is of the same parity as $2s_\text{max}$ are considered, since for the others the multiplicities are trivially zero --- hence the empty boxes. As an example, a $k=2$ plane of spin $s=3$ decomposes into states of spin 1, 3, and 5. Note that the rightmost three entries in each row, except for the first one ($s=1$, $k=2$), are 1, 0, 1, in accordance with what was derived in the text. We have included entries up to $s=7/2$, $k=4$, because this is the lowest spin  case where a multiplicity of 2 appears, necessitating a special treatment. On the other hand, the lowest $k$ value where this happens is $k=3$, for $s=4$ (not shown in the table).}
\label{multi234}
\end{table}
\subsection{The multiconstellation of an $(s,k)$-plane}
\label{Tcoakp}
We sketched above the way to bring $D^{(s,k)}$ in block diagonal form by a change of basis in $\wedge^k \mathcal{H}$ --- we denote the unitary matrix implementing that change by $U$, while $\mathcal{D}^{(s,k)}$ will denote the block-diagonalized representation matrix (\ie, in the BD basis), with $\mathcal{D}^{(s,k)}=U D^{(s,k)} U^\dagger$. The column vector $\ket{\boldsymbol{\Psi}}$ gets transformed, accordingly, to $\ket{\boldsymbol{\Psi}}_D=U \ket{\boldsymbol{\Psi}}$, with
\begin{equation}
\label{PsiDdef}
\ket{\boldsymbol{\Psi}}_D^T
=(
\ket{\psi^{(s_\text{max})}}^T, \, 
\ket{\psi^{(s_\text{max}-2)}}^T, \,
\ldots
)
\, ,
\end{equation}
where each $\ket{\psi^{(j)}}^T$ is a row vector of $2j+1$ components ---  these irreducible multiplets are ordered in decreasing spin value. Each $\ket{\psi^{(j)}}$, defines a spin-$j$ state, and, when $j>0$, a Majorana constellation $C_j$. The full list  of these constellations, $\mathcal{C} \equiv \{C_1,C_2,\ldots, \}$, misses the information about the overall normalization and phase of each $\ket{\psi^{(j)}}$, so, to completely specify $\ket{\boldsymbol{\Psi}}_D$, we need to define a standard, normalized state $\ket{\psi_C}$, corresponding to each possible constellation $C$, by choosing arbitrarily a phase for it, and then write $\ket{\psi^{(j)}}=z_j \ket{\psi_{C_j}}$, with the complex number $z_j$ carrying now the information about the norm and overall phase of $\ket{\psi^{(j)}}$. Then the set $\{Z, \, \mathcal{C}\}$, where $Z=(z_1,z_2,\ldots)$, completely specifies 
$\ket{\boldsymbol{\Psi}}_D$. If the length of $Z$ is $2m+1$, one can view it as a spin-$m$ \emph{spectator ``state''}, and associate to it, \emph{\'a la} Majorana, a \emph{spectator constellation} $\tilde{C}$ --- then the constellations $\{\tilde{C}, \, \mathcal{C} \}$, which miss only the overall phase and normalization of $\ket{\boldsymbol{\Psi}}_D$, completely specify the $k$-plane $\Pi=[\ket{\boldsymbol{\Psi}}_D ]$. Note that, under the $SU(2)$ action on $\mathcal{H}$, the constellations in $\mathcal{C}$ rotate the way Majorana constellations do, but $\tilde{C}$ may transform in a complicated way, as the phases of the various $z_j$ (but not their moduli) may change --- we show now that, for almost all $(s,k)$-planes,  things may be arranged so that $\tilde{C}$ remains invariant under rotations.

Our treatment, at this point, will be limited by the following assumption: none of  the irreducible components 
$\ket{\psi^{(j)}}$ in the \rhs{} of~(\ref{PsiDdef}), with $j>1$,  have rotational symmetries. Regarding this, note that spin-1 states always have at least one rotational symmetry, given by a rotation by $\pi$ around the line bisecting the two stars in the Majorana constellation --- as this rotation interchanges two fermions, it imparts a phase of $\pi$ to the ket in the Hilbert space. 
Denote by 
$\tilde{\mathbb{P}}$ the corresponding projective space, with the rotationally symmetric states, excluded. Then the orbit of $\ket{\psi^{(j)}}$, under the action of $SO(3)$ is, itself, diffeomorphic to $SO(3)$ --- we call the space $\tilde{\mathcal{S}}$ of those orbits \emph{shape space}, \ie, each point in $\tilde{\mathcal{S}}$ represents an entire orbit in $\tilde{\mathbb{P}}$. Another way to describe this construction is to define an equivalence relation $\sim$ between constellations, by declaring $C'$ and $C$ to be equivalent, $C' \sim C$, iff there exists a rotation $R \in SO(3)$ such that $C'=R(C)$. That same relation can be defined in $\tilde{\mathbb{P}}$, since (non-symmetric) constellations are in 1 to 1 correspondence with states in $\tilde{\mathbb{P}}$. Then $\tilde{\mathcal{S}}=\tilde{\mathbb{P}}/\sim$, \ie, each point in shape space is an equivalence class of states in the corresponding projective space. 

Points in 
$\tilde{\mathcal{S}}$ correspond to shapes of Majorana constellations, defined, \eg, by the angles between any two stars in the constellation. Denote by $\pi$ the projection from $\tilde{\mathcal{P}}$ to $\tilde{\mathcal{S}}$, that sends each constellation $C$ to its shape $\pi(C)$. Then $\pi^{-1}(S)$ is the fiber above the shape $S$, consisting of all those constellations that share the shape $S$, and differ among themselves by a rotation. A gauge choice $\sigma$ is a map from  $ \tilde{\mathcal{S}}$ to $\tilde{\mathcal{P}}$, such that $\pi(\sigma(S))=S$, and  consists in defining a reference orientation for each shape. Given such a gauge choice, an arbitrary constellation $C$ may be defined by giving its shape 
$\pi(C)$, and the rotation $R_{\sigma,C}$, that, applied to the reference constellation (of the same shape) 
$\sigma(\pi(C))$, produces $C$, \ie, we may write 
\begin{equation}
\label{factorP}
C=(\pi(C),R_{\sigma,C})
\, ,
\quad\text{with}
\quad
R_{\sigma,C}\left( \sigma(\pi(C) \right)=C
\, .
\end{equation}
Note that, by restricting our discussion to $\tilde{\mathbb{P}}$, we guarantee that $R_{\sigma,C}$ is unique.
The algorithm for assigning a phase to a  constellation $C$, thus obtaining a state $\ket{\psi_{C}}$ in 
$\mathcal{H}$,  is then as follows: assign first, arbitrarily, a phase to the reference constellation 
$\sigma(\pi(C))$, obtaining the state 
$\ket{\psi_{\sigma,\pi(C)}}$. Then rotate this state by $R_{\sigma,C}$ to get $\ket{\psi_{C}}$, \ie, $\ket{\psi_{C}}=D^{(j)}(R_{\sigma, C}) \ket{\psi_{\sigma, \pi(C)}}$. 

Consider now a rotation $R_0$ acting on $\ket{\boldsymbol{\Psi}}_D$, \ie, $\ket{\boldsymbol{\Psi}}_D \rightarrow \ket{\boldsymbol{\Psi}'}_D = \mathcal{D}^{(s,k)}(R_0) \ket{\boldsymbol{\Psi}}_D$, inducing a transformation $\ket{\psi^{(j)}} \rightarrow \ket{{\psi'}^{(j)}}=D^{(j)}(R_0)\ket{\psi^{(j)}}$ --- at the level of constellations $C'_j=R_0(C_j)$. We have, by definition,  $\ket{{\psi'}^{(j)}}=z_j' \ket{\psi_{C'_j}}$. On the other hand, $R_{\sigma,C'_j}=R_0 \circ R_{\sigma,C_j}$, so that
\begin{align*}
{\psi'}^{(j)}
&=
D^{(j)}(R_0)\ket{\psi^{(j)}}
\\
 &=
 z_j D^{(j)}(R_0) \ket{\psi_{C_j}}
 \\
  &=
z_j D^{(j)}(R_0) D^{(j)}(R_{\sigma,C_j}) \ket{\psi_{\sigma,\pi(C_j)}}
\\
 &=
z_j  D^{(j)}(R_0 \circ R_{\sigma,C_j})\ket{\psi_{\sigma,\pi(C'_j)}}
\\
 &=
z_j D^{(j)}(R_{\sigma,C'_j}) \ket{\psi_{\sigma,\pi(C'_j)}}
\\
&=
z_j \ket{\psi_{C'_j}}
\, ,
\end{align*}
implying that $z'_j=z_j$, \ie, with the phase conventions assumed above, the spectator constellation  is invariant under rotations. 

Reference orientations for constellations are usually defined by a set of rules  that, \eg, puts one star at the north pole, a second one in the $x$-$z$ plane, with positive $x$, \etc (see, \eg,~\cite{Chr.Her:17}). Apart from the appearance of occasional ambiguities, the rules get increasingly complicated as the number of stars increases. We propose a more economic set of rules, which work, as is typical of such rules,  for almost all (but not all) constellations. Given the constellation $C$ (we drop the index $j$ for notational simplicity), with corresponding density matrix $\rho_C$, compute the spin expectation value $\vec{S}=\text{Tr}(\rho_C \mathbf{S})$, which, generically, is nonzero. Rotate $C$ to $C_1=R_1(C)$ so that $R_1(\vec{S})$ is along the positive $z$-axis, call $\rho_1=D^{(s)}(R_1) \rho_C D^{(s)}(R_1)^{-1}$ the rotated density matrix. Expand $\rho_1$ in polarization tensors~\cite{Var.Mos.Khe:88}, and identify the first non-zero component for $m\neq 0$. That component is, in general, a complex number $r e^{i\alpha}$,  rotate then $C_1$ around $z$ clockwise, by the minimal angle possible,  to make it real and positive, and call the rotated constellation $C_2=R_2(C_1)=(R_2 \circ R_1)(C)$ --- this is the reference orientation for the shape of $C$, \ie, $C_2=\sigma(\pi(C))$. A corresponding state may be defined by an arbitrary choice of phase, \eg, by taking its first nonzero component, in the $S_z$-eigenbasis, to be  real and positive. Applying to this state the unique rotation that sends $C_2$ to $C$ one gets the reference state $\ket{\psi_C}$.

A natural question that arises at this point is that of the relation between the constellations $C_j$ defined here and the principal constellation of the previous section. To elucidate this connection we need the following two results.
\begin{lemma}
\label{ipthm}
Given  $k \times n$ matrices $V$, $W$, as in~(\ref{Vdef}), and the corresponding vectors $\mathbf{V}$, $\mathbf{W}$, as in~(\ref{Plcoord}),  we have
\begin{equation}
\label{ipeqip}
\ipc{V}{W}=\braket{\mathbf{V}}{\mathbf{W}}
\, ,
\end{equation}
where  $\ipc{V}{W}$ is defined in~(\ref{ipkf}), and $\braket{\mathbf{V}}{\mathbf{W}}=\sum_{\vec{I}} \overline{V^{\vec{I}}} W^{\vec{I}}$ is the standard Hilbert space inner product.
\end{lemma}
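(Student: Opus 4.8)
The statement asserts that the two \emph{a priori} different inner products coincide: the Gram determinant of the frames (equation~(\ref{ipkf})) and the coordinate inner product on the Pl\"ucker vectors. The plan is to recognize the Gram determinant as the determinant of a product of two rectangular matrices, and then to apply the Cauchy--Binet formula, whose summands are precisely the products of conjugate Pl\"ucker coordinates appearing on the right-hand side.

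\textbf{Step 1: rewrite the left-hand side as a determinant of a matrix product.} With the hermitian convention $\ipc{v_\mu}{w_\nu}=\sum_{a=1}^n \overline{v_\mu^{\phantom{\mu}a}}\, w_\nu^{\phantom{\nu}a}$, I would introduce the entrywise conjugate $\bar{V}$ (a $k\times n$ matrix with entries $\overline{v_\mu^{\phantom{\mu}a}}$) and the transpose $W^T$ (an $n\times k$ matrix with entries $w_\nu^{\phantom{\nu}a}$). Then the $(\mu,\nu)$ entry of $\bar{V}W^T$ is exactly $\ipc{v_\mu}{w_\nu}$, so the Gram matrix in~(\ref{ipkf}) equals $\bar{V}W^T$ and therefore $\ipc{V}{W}=\det(\bar{V}W^T)$.

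\textbf{Step 2: apply Cauchy--Binet.} The Cauchy--Binet formula expands $\det(\bar{V}W^T)$ as a sum, over all ordered $k$-element column subsets $\vec{I}=(i_1<\dots<i_k)$, of the product of the corresponding $k\times k$ minors: the minor of $\bar{V}$ on columns $\vec{I}$ times the minor of $W^T$ on rows $\vec{I}$. I would then identify these minors with the Pl\"ucker coordinates: the minor of $\bar{V}$ on columns $\vec{I}$ is $\overline{V^{\vec{I}}}$ (conjugation commutes with taking a minor), while the minor of $W^T$ on rows $\vec{I}$ equals the minor of $W$ on columns $\vec{I}$, which is $W^{\vec{I}}$ (a determinant is invariant under transposition). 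This yields
\begin{equation*}
\ipc{V}{W}=\det(\bar{V}W^T)=\sum_{\vec{I}} \overline{V^{\vec{I}}}\, W^{\vec{I}}=\braket{\mathbf{V}}{\mathbf{W}}\, ,
\end{equation*}
which is the claim.

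\textbf{Main obstacle.} The conceptual content is a one-line application of Cauchy--Binet; the only real care needed is bookkeeping of conventions. I would verify that the complex conjugation sits on the correct factor (dictated by the hermitian convention in~(\ref{ipkf})), and that the summation over \emph{ordered} multi-indices $\vec{I}$ matches the definition of the Pl\"ucker coordinates as minors taken with $1\le i_1<\dots<i_k\le n$ (no overcounting, and consistent with the total-antisymmetry extension introduced earlier). Once these conventions are fixed, the identity is immediate and holds verbatim.
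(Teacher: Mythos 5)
Your proof is correct and follows exactly the route the paper takes: the paper's own proof simply cites the Cauchy--Binet formula, and your Steps 1--2 are the careful spelling-out of that one-line argument, with the conjugation and minor-identification conventions handled correctly.
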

\begin{proof}
The statement is an immediate consequence of the Cauchy-Binet formula for the expansion of a determinant (see, \eg, Sect. 2.9 of~\cite{Sha.Rem:13}).
\end{proof}
\begin{lemma}
\label{cohplue}
The irreducible components of the coherent plane $\Pi_n$ are 
\begin{equation}
\label{irrcPin}
\ket{\boldsymbol{\Pi}_n}=
\left(
\begin{array}{cccc}
\ket{n^{s_{\text{max}}}}
&
0 
&
\ldots
&
0
\end{array}
\right)
\, ,
\end{equation}
where $\ket{n^{(s_{\text{max}})}}$ is the spin-$s_{\text{max}}$ coherent state in the direction $n$.
\end{lemma}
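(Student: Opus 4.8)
The plan is to exploit rotational covariance to reduce to the case $n=\hat z$, and then invoke the maximal-eigenvalue argument already used in section~\ref{TSao}. First I would note that, by Theorem~\ref{ckp_thm}, $\Pi_z=[\ket{s,s},\ket{s,s-1},\ldots,\ket{s,s-k+1}]$, so the associated $k$-vector is a single Pl\"ucker basis element,
\begin{equation}
\label{PizasPlucker}
\ket{\boldsymbol{\Pi}_z}=\ket{s,s}\wedge\ket{s,s-1}\wedge\ldots\wedge\ket{s,s-k+1}=e_{(1,2,\ldots,k)}
\, ,
\end{equation}
which is precisely the ``top'' plane $\ket{s_\text{max},s_\text{max}}$ of section~\ref{TSao}: an eigenvector of $S^{(s,k)}_z$ with the maximal eigenvalue $s_\text{max}=\frac{1}{2}k\tilde{k}$ (see~(\ref{smaxdef})).

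The key step is then the following observation. Since $s_\text{max}$ is the largest spin appearing in the decomposition of $D^{(s,k)}$, no irreducible spin-$j$ multiplet with $j<s_\text{max}$ can contain a state of $S_z$-eigenvalue $s_\text{max}$, the maximal eigenvalue in such a multiplet being $j<s_\text{max}$. Hence, in the block-diagonal basis of~(\ref{PsiDdef}), $\ket{\boldsymbol{\Pi}_z}$ must lie entirely within the unique $j=s_\text{max}$ block, as its highest-weight vector, and all lower components vanish. But that highest-weight vector $\ket{s_\text{max},s_\text{max}}$ is, by definition, the spin-$s_\text{max}$ coherent state $\ket{z^{(s_\text{max})}}$ along $\hat z$, which settles the claim for $n=\hat z$.

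For a general direction $n$ I would pick a rotation $R$ with $R(\hat z)=n$, so that $\ket{\boldsymbol{\Pi}_n}=D^{(s,k)}(R)\ket{\boldsymbol{\Pi}_z}$ by~(\ref{repwedgek}). Passing to the block-diagonal basis, $\mathcal{D}^{(s,k)}(R)=U D^{(s,k)}(R) U^\dagger$ acts block-diagonally, hence it sends the vanishing lower components to zero and merely rotates the sole surviving top component by $D^{(s_\text{max})}(R)$. The defining covariance of spin coherent states, $D^{(s_\text{max})}(R)\ket{z^{(s_\text{max})}}=\ket{n^{(s_\text{max})}}$, then identifies that component with the spin-$s_\text{max}$ coherent state along $n$, yielding~(\ref{irrcPin}).

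The only point requiring care, rather than a genuine obstacle, is the normalization and phase bookkeeping: one must check that $e_{(1,\ldots,k)}$ is a unit vector in the orthonormal Pl\"ucker basis, that the unitary $U$ preserves this, and that the phase convention chosen for $\ket{s_\text{max},s_\text{max}}$ is the one making it literally equal to $\ket{z^{(s_\text{max})}}$, so that~(\ref{irrcPin}) holds exactly rather than up to a scalar. Everything else follows from the maximal-eigenvalue/highest-weight observation together with the covariance already encoded in $D^{(s,k)}$.
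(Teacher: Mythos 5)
Your proposal is correct and follows essentially the same route as the paper: establish the claim at $n=\hat z$ by identifying $\ket{\boldsymbol{\Pi}_z}$ with the highest-weight vector of the unique $j=s_\text{max}$ block, then transport to general $n$ using the block-diagonal action of $D^{(s,k)}(R)$ and the covariance of coherent states. The only difference is that you spell out \emph{why} the lower irreducible components of $\ket{\boldsymbol{\Pi}_z}$ vanish (the maximal-$S_z$-eigenvalue argument), which the paper's proof simply asserts.
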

\begin{proof}
$\Pi_n$ is obtained by $\Pi_z$ by, say, the geodesic rotation $R_{(-\sin\phi,\cos\phi,0),\theta}$ that sends $z$ to $n$. The irreducible components of $\Pi_z$ are 
\begin{equation}
\label{irrcPiz}
\ket{\boldsymbol{\Pi}_z}=
\left(
\begin{array}{cccc}
(1,0,\ldots,0)
&
0 
&
\ldots
&
0
\end{array}
\right)
\, ,
\end{equation}
where the first ket entry is the spin-$s_{\text{max}}$ coherent state along $z$, which is mapped to 
$\ket{n^{(s_{\text{max}})}}$ by the above rotation. 
\end{proof}
\begin{theorem}
\label{PrincipalC}
The principal constellation of an $(s,k)$-plane $\Pi$ coincides with the Majorana constellation of its spin-$s_{\text{max}}$ irreducible component.
\end{theorem}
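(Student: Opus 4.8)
The plan is to reduce the defining orthogonality condition of the principal constellation to a single-state orthogonality in the top spin-$s_{\text{max}}$ block, where the ordinary Majorana correspondence applies directly. Recall that $P_\Pi(\zeta)=\zeta^{k\tilde{k}}\ipc{\tilde{V}_{-n}}{W}$. First I would invoke Lemma~\ref{ipthm} to replace the $k$-frame inner product by the Hilbert-space inner product on $\wedge^k\mathcal{H}$, writing $\ipc{\tilde{V}_{-n}}{W}=\braket{\boldsymbol{\tilde{V}}_{-n}}{\boldsymbol{W}}$, where $\ket{\boldsymbol{W}}=\ket{\boldsymbol{\Psi}}$ is the Pl\"ucker vector of $\Pi$. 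Since the change of basis $U$ to the block-diagonal (BD) frame is unitary, this inner product is unchanged when both vectors are expressed in the BD basis, so $\braket{\boldsymbol{\tilde{V}}_{-n}}{\boldsymbol{W}}=\braket{(\boldsymbol{\tilde{V}}_{-n})_D}{\boldsymbol{\Psi}_D}$.

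Next I apply Lemma~\ref{cohplue}. Since $\tilde{V}_{-n}$ represents the coherent plane $\Pi_{-n}$, its BD image $(\boldsymbol{\tilde{V}}_{-n})_D$ is supported entirely on the spin-$s_{\text{max}}$ block, where it equals a scalar multiple $f(\zeta)\,\ket{(-n)^{(s_{\text{max}})}}$ of the spin-$s_{\text{max}}$ coherent state along $-n$. Consequently, only the top component $\ket{\psi^{(s_{\text{max}})}}$ of $\ket{\boldsymbol{\Psi}}_D$ survives the inner product and all lower irreducible components drop out:
\begin{equation}
\ipc{\tilde{V}_{-n}}{W}=\overline{f(\zeta)}\,\braket{(-n)^{(s_{\text{max}})}}{\psi^{(s_{\text{max}})}}\, .
\end{equation}
This already proves the statement at the level of supports: $n$ lies in the principal constellation of $\Pi$, i.e.\ $\ipc{\Pi_{-n}}{\Pi}=0$, precisely when $\braket{(-n)^{(s_{\text{max}})}}{\psi^{(s_{\text{max}})}}=0$, which by the Majorana relation~(\ref{Majoip}), applied to the spin-$s_{\text{max}}$ state $\ket{\psi^{(s_{\text{max}})}}$, is exactly the condition that $n$ be a star of the Majorana constellation of $\ket{\psi^{(s_{\text{max}})}}$.

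To obtain the result with the correct multiplicities I would track the scalar prefactor explicitly. Using~(\ref{Majoip}) at spin $s_{\text{max}}$ gives $\braket{(-n)^{(s_{\text{max}})}}{\psi^{(s_{\text{max}})}}=\frac{(\bar{\zeta}/\zeta)^{s_{\text{max}}}}{(1+\zeta\bar{\zeta})^{s_{\text{max}}}}P_{\ket{\psi^{(s_{\text{max}})}}}(\zeta)$, so that, recalling $k\tilde{k}=2s_{\text{max}}$, the combination $\zeta^{k\tilde{k}}\overline{f(\zeta)}(\bar{\zeta}/\zeta)^{s_{\text{max}}}(1+\zeta\bar{\zeta})^{-s_{\text{max}}}$ should reduce to a $\zeta$-independent nonzero constant $c$, yielding $P_\Pi(\zeta)=c\,P_{\ket{\psi^{(s_{\text{max}})}}}(\zeta)$ and hence equality of root sets with multiplicity. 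The main obstacle is precisely this bookkeeping of $f(\zeta)$: one must verify that the \emph{unnormalized} standard representative $\tilde{V}_{-n}$ maps under $U$ to a nonvanishing multiple of the coherent state, and that the $(1+\zeta\bar{\zeta})$- and $(\zeta\bar{\zeta})$-powers carried by $f(\zeta)$ (coming from the norm of $\tilde{V}_{-n}$ versus the coherent-state normalization) cancel exactly against those in~(\ref{Majoip}), so that no spurious zeros are introduced or removed. A purely set-theoretic reading of the second paragraph already matches the stars, but it does not by itself fix the multiplicities (two polynomials of equal degree $k\tilde{k}$ can share supports yet differ in multiplicities); the tangency characterization of Theorem~\ref{starspsiPi} together with the prefactor identity above is what pins these down.
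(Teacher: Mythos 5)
Your proposal follows exactly the paper's route: Lemma~\ref{ipthm} to pass to the Hilbert-space inner product on $\wedge^k\mathcal{H}$, then Lemma~\ref{cohplue} to kill all but the spin-$s_{\text{max}}$ block, reducing $P_\Pi$ to $\braket{(-n)^{(s_{\text{max}})}}{\psi^{(s_{\text{max}})}}$ and hence to the Majorana polynomial of the top component. Your extra paragraph tracking the scalar prefactor $f(\zeta)$ to secure equality of multiplicities is a point the paper's two-line proof silently glosses over, and is a worthwhile addition rather than a deviation.
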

\begin{proof}
Using Lemma~\ref{ipthm} the principal polynomial of $\Pi$ can be expressed in terms of $\braket{\Pi_{-n}}{\Pi}$, which, due to Theorem~\ref{cohplue}, reduces to $\braket{-n^{(s_{\text{max}})}}{\psi^{(s_{\text{max}})}}$.
\end{proof}
\begin{corollary}
\label{degreeMajoPoly_thm}
For a generic $(s,k)$-plane $\Pi$, the degree of $P_\Pi(\zeta)$ is $k\tilde{k}$.
\end{corollary}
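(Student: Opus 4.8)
The plan is to leverage Theorem~\ref{PrincipalC}, which identifies the principal constellation of $\Pi$ with the Majorana constellation of its spin-$s_{\text{max}}$ irreducible component $\ket{\psi^{(s_{\text{max}})}}$. Since the roots of $P_\Pi(\zeta)$ are, by construction, the stereographic images of the stars of the principal constellation, this identification means that $P_\Pi(\zeta)$ coincides, up to a nonzero multiplicative constant, with the ordinary Majorana polynomial of the spin-$s_{\text{max}}$ state $\ket{\psi^{(s_{\text{max}})}}$. The degree question for $P_\Pi$ is thereby reduced to the degree of that ordinary Majorana polynomial.

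First I would read off from the explicit form~(\ref{pjk}) that the Majorana polynomial of an arbitrary spin-$j$ state has degree at most $2j$, the coefficient of the top power $\zeta^{2j}$ being (up to a nonzero normalization factor) the component $c_j$ of the state along the highest-weight ket $\ket{j,j}$. Hence the degree equals $2j$ precisely when that top component is nonzero, a drop in degree signalling stars escaping to the south pole, $\zeta=\infty$. Applying this with $j=s_{\text{max}}=\frac{1}{2}k\tilde{k}$ (see~(\ref{smaxdef})), the maximal possible degree is exactly $k\tilde{k}$, and the claim reduces to showing that, generically, the component of $\ket{\psi^{(s_{\text{max}})}}$ along $\ket{s_{\text{max}},s_{\text{max}}}$ does not vanish.

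To secure genericity I would argue as follows. Extracting the $\ket{s_{\text{max}},s_{\text{max}}}$ coefficient of $\ket{\psi^{(s_{\text{max}})}}$ is a fixed linear functional on $\wedge^k\mathcal{H}$ (the appropriate row of the block-diagonalizing matrix $U$), so, viewed as a function of the Pl\"ucker coordinates $\Psi^{\rha{I}}$ of $\Pi$, it is a single linear form. Its vanishing therefore cuts out a hyperplane section of the Pl\"ucker-embedded Grassmannian, which is a proper Zariski-closed subset as long as the form is not identically zero there. That it is not identically zero follows by exhibiting one plane on which it is nonzero: the ``top'' plane $\ket{s,s}\wedge\ldots\wedge\ket{s,s-(k-1)}$ is, as noted just after~(\ref{smaxdef}), exactly the highest-weight vector $\ket{s_{\text{max}},s_{\text{max}}}$, so its spin-$s_{\text{max}}$ component is itself and its top coefficient equals $1$. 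Consequently the vanishing locus is a proper subvariety, its complement is open and dense, and for a generic $\Pi$ the top coefficient is nonzero, giving $\deg P_\Pi(\zeta)=k\tilde{k}$.

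The only delicate point, and the one I expect to demand the most care, is the genericity bookkeeping: one must confirm that ``top coefficient nonzero'' really defines a proper, nonempty Zariski-open subset, rather than being identically zero or vacuously full. This is settled by the two ingredients just used, namely the algebraic (Pl\"ucker-linear) dependence of the coefficient on $\Pi$ and the top plane as an explicit witness; the remaining content is the direct reading-off of the leading term in~(\ref{pjk}) together with the reduction supplied by Theorem~\ref{PrincipalC}.
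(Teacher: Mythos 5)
Your proposal is correct and follows essentially the same route as the paper, which also deduces the corollary from Theorem~\ref{PrincipalC} together with $s_{\text{max}}=k\tilde{k}/2$. The only difference is that you spell out the genericity bookkeeping (the top Pl\"ucker coefficient as a linear form, non-vanishing as a Zariski-open condition, the top plane as witness), which the paper leaves implicit in its one-line proof.
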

\begin{proof}
The assertion follows immediately from the previous theorem and the fact that $s_{\text{max}}=k \tilde{k}/2$.
\end{proof}
A final remark is due regarding the case of ``degeneracy'', \ie, when the multiplicities $m_j^{(s,k)}$ are greater than 1. One then needs to choose a basis in the degenerate subspace, and let each basis element generate a spin multiplet by successive application of $S_-$. The projections of the $(s,k)$-plane $\ket{\Psi}$ onto the subspaces spanned by each of these multiplets give rise to spin-$j$ constellations, as in the non degenerate case. The salient feature here though is that the constellations thus obtained depend on the above choice of basis. The situation calls for the adoption of a particular algorithm that will single out a ``canonical'' choice of basis, much like our algorithm above for defining a standard phase for a given constellation. As the smallest example where this shows up is for a $(\frac{7}{2},4)$-plane (last line in Table~\ref{multi234}), involving  70 stars in all, we feel that, from a practical point of view, it is not necessary to spell out all the relevant details at this point. A suggestion on how to choose a canonical basis in the degenerate subspace is outlined in Example~\ref{cocb} below~\cite{Ful:97,San.Bra.Sol.Egu:17}.
\subsection{Examples}
\label{examples}
Before presenting a series of examples, we summarize, in a streamlined form,  the procedure we follow in order to derive the multiconstellation of a spin-$s$ $k$-plane $\ket{\boldsymbol{\Psi}}$. 
\begin{enumerate}
\item
Construct the BD basis and expand $\ket{\boldsymbol{\Psi}}$ in it to obtain 
$\ket{\boldsymbol{\Psi}}_D
=
\left( 
\ket{\psi^{(s_{\text{max}})}}
\ldots \ket{\psi^{(j)}} \ldots 
\right)^T$.
\item
For each irreducible component $\ket{\psi^{(j)}}$ in $\ket{\boldsymbol{\Psi}}_D$, with $j \neq 0$, determine a complex number $z^{(j)}$ as follows:
\begin{enumerate}
\item
Compute the SEV $\vec{S}^{(j)}=\bra{\psi^{(j)}} \mathbf{S} \ket{\psi^{(j)}}$ --- call $(\theta^{(j)},\phi^{(j)})$ its spherical polar coordinates (if the SEV vanishes, for any $j>0$, the procedure is not applicable).
\item
Compute the rotation matrix $R^{(j)}=\exp \left[ -i \theta^{(j)} 
\left( -\sin(\phi^{(j)}) S^{(j)}_x + \cos(\phi^{(j)}) S^{(j)}_y \right) \right]= 
\exp \left[ \frac{\theta^{(j)}}{2}
\left( e^{-i\phi^{(j)}}S_+ - e^{i\phi^{(j)}}S_- \right) \right]$.
\item
Compute $\ket{\psi^{(j)}_1}=R^{(j)} \ket{\psi^{(j)}}$, the SEV of which points along $z$.
\item
Compute $\rho^{(j)}_1=\ket{\psi^{(j)}_1}\bra{\psi^{(j)}_1}$ and expand it in polarization tensors, 
\begin{equation}
\label{rhoexp}
\rho^{(j)}_1 \rightarrow 
\left(
(\rho_{0,0}), \, 
(\rho_{1,1}, \, \rho_{1,0},\, \rho_{1,-1}),
\ldots, 
(\rho_{2j,2j}, \ldots, \rho_{2j,-2j} )
\right)
\, .
\end{equation}
Identify the first nonzero component $\rho_{\ell m} \equiv r e^{i\alpha}$, with $m \neq 0$.
\item
Compute $\ket{\psi^{(j)}_2}=e^{-i \alpha S_z/m} \ket{\psi^{(j)}_1}$ and identify its first nonzero component in the $S_z$-eigenbasis, denote the latter by $pe^{i\beta}$.
\item
Compute $z^{(j)}=\sqrt{\braket{\psi^{(j)}}{\psi^{(j)}}} e^{i\beta}$.
\item
If there is a spin-0 component $\ket{\psi^{(0)}}=(\psi^{(0)}_0)$  in $\ket{\boldsymbol{\Psi}}_D$, put $z^{(0)}=\psi^{(0)}_0$.
\end{enumerate}
\item
Determine the constellations $C_j$ for each $\ket{\psi^{(j)}}$, $j \neq 0$, as well as the spectator constellation 
$\tilde{C}$, corresponding to the ``state'' $Z=(z^{(s_{\text{max}})},\ldots,z^{(j)}, \ldots)$.
\end{enumerate}
\begin{myexample}{Irreducible component for $(1,2)$-planes}{ics1k2p}
An orthonormal basis in the Hilbert space $\mathcal{H}_{1}$ is given by the eigenvectors of $S_z$, 
$\{e_1,e_2,e_3\}
=
\{ 
\ket{1,1},
\ket{1,0},
\ket{1,-1}
\}
$. 
The associated orthonormal basis in $\mathcal{H}^{\wedge 2}$ is 
$
\{
e_{12},
e_{13},
e_{23}
\}
$,
where $e_{ij}\equiv e_i \wedge e_j$. The highest $S_z$-eigenvalue eigenvector is $e_{12}$, with eigenvalue $1+0=1$. Applying $S_-$ twice, one generates the entire spin-1 multiplet,
\begin{equation}
\label{s1multi}
\{ 
e_{(1,1)}
, \,
e_{(1,0)}
, \,
e_{(1,-1)}
\}
=
\{
e_{12}
, \,
e_{13}
, \,
e_{23}
\}
\, ,
\end{equation}
so that there is only one (spin-1)  multiplet in this case, and the matrix $U$ connecting the Pl\"ucker basis to the BD one is the identity matrix. Accordingly, $(1,2)$-planes are characterized by a single constellation of two stars, and no spectator constellation can be defined, which is as expected, as $(1,2)$-planes are the orthogonal complement of spin-1 states.

Consider  the $(1, 2)$-plane $\ket{\Sigma}=[\tilde{W}]$, with
\begin{equation}
\label{s1k2plane}
\tilde{W}=
\left(
\begin{array}{c}
\ket{\psi_1}^T
\\
\ket{\psi_2}^T
\end{array}
\right)
=
\left(
\begin{array}{ccc}
1 & 0 & i
\\
0 & 1 & 1-i
\end{array}
\right)
\, .
\end{equation}
The Majorana constellations of the two kets $\ket{\psi_1}$, $\ket{\psi_2}$, spanning $\ket{\Sigma}$, are
\begin{equation}
\label{constpsi12}
\left\{ n_{11},n_{12}\right\}
=
\left\{ 
\big(
\frac{1}{\sqrt{2}},-\frac{1}{\sqrt{2}},0 
\big)
\, , \, 
\big(
-\frac{1}{\sqrt{2}},\frac{1}{\sqrt{2}},0 
\big) 
\right\}
\, ,
\qquad
\left\{ 
n_{21},n_{22}
\right\}=
\left\{ 
\big(
\frac{1}{\sqrt{2}},-\frac{1}{\sqrt{2}},0 
\big)
\, , \, 
\big(
0,0,-1 
\big) 
\right\}
\, ,
\end{equation}
respectively. The (unnormalized) Pl\"ucker (and BD) components of $\ket{\Sigma}$ are $\ket{\Sigma}=\ket{\Sigma}_D=(1,1-i,-i)$, with constellation
\begin{equation}
\label{Sigmaconst}
\left\{ n_A,n_B \right\}=
\left\{ 
\big(
\frac{1}{\sqrt{2}},-\frac{1}{\sqrt{2}},0 
\big)
\, , \,
\big(
\frac{1}{\sqrt{2}},-\frac{1}{\sqrt{2}},0 
\big) 
\right\}
\, ,
\end{equation}
\ie, $\ket{\Sigma}$ is a $(1,2)$-coherent plane. We note that the functional relationship $n_{A,B}(n_{ij})$, even in this, simplest of cases, is surprisingly complicated.
\end{myexample}
\begin{myexample}{Irreducible components for $(\frac{3}{2},2)$-planes}{ics322p}
An orthonormal basis in the Hilbert space $\mathcal{H}_{\frac{3}{2}}$ is given by the eigenvectors of $S_z$, 
$\{e_1,e_2,e_3, e_4\}
=
\{ 
\ket{\frac{3}{2},\frac{3}{2}},
\ket{\frac{3}{2},\frac{1}{2}},
\ket{\frac{3}{2},-\frac{1}{2}},
\ket{\frac{3}{2},-\frac{3}{2}}
$. 
The associated orthonormal basis in $\mathcal{H}^{\wedge 2}$ is 
$
\{
e_{12},
e_{13},
e_{14},
e_{23},
e_{24} ,
e_{34}
\}
$,
where $e_{ij}\equiv e_i \wedge e_j$. The highest $S_z$-eigenvalue eigenvector is $e_{12}$, with eigenvalue $\frac{3}{2}+\frac{1}{2}=2$. Applying $S_-$ four times, one generates the entire spin-2 multiplet,
\begin{equation}
\label{s2multi}
\{ 
e_{(2,2)}
, \,
e_{(2,1)}
, \,
e_{(2,0)}
, \,
e_{(2,-1)}
, \,
e_{(2,-2)}
\}
=
\{
e_{12}
, \,
e_{13}
, \,
\frac{1}{\sqrt{2}} (e_{14}+ e_{23})
, \,
e_{24}
, \,
e_{34}
\}
\, .
\end{equation}
The $S_z$-eigenvalue 0 is doubly degenerate, the state orthogonal to $e_{(2,0)}$ is the spin-0 state 
$e_{(0,0)}=(e_{14}-e_{23})/\sqrt{2}$. The matrix $U$ effecting the change between the two bases, 
$\ket{\boldsymbol{\Psi}}_D=U \ket{\boldsymbol{\Psi}}$, is
\begin{equation}
\label{Us1k2m}
U=
\left(
\begin{array}{cccccc}
1 & 0 & 0 & 0 & 0 & 0
\\
0 & 1 & 0 & 0 & 0 & 0 
\\
0 & 0 & \frac{1}{\sqrt{2}} & \frac{1}{\sqrt{2}} & 0 & 0
\\
0 & 0 & 0 & 0 & 1 & 0
\\
0 & 0 & 0 & 0 & 0 & 1
\\
0 & 0 & \frac{1}{\sqrt{2}} & -\frac{1}{\sqrt{2}} & 0 & 0
\end{array}
\right)
\, .
\end{equation}

Consider now  the two 2-planes $\tilde{W}_1$, $\tilde{W}_2$, encountered in Example~\ref{Ts322planes} (see equation~(\ref{twosols32k2})), which shared the same principal polynomial, $\zeta^4-1$, and, hence, principal constellation (a square on the equator). The rows of $\tilde{W}_1$, after normalization, are $(e_1+i e_3)/\sqrt{2}$, $(e_2+i e_4)/\sqrt{2}$, so that the 2-plane $\tilde{W}_1$ represents is
\begin{align*}
\ket{\Sigma_1}=[\tilde{W}_1]
&=
\frac{1}{2} (e_1+i e_3) \wedge (e_2+i e_4)
\\
 &=
\frac{1}{2} (e_{12}+i e_{14} -i e_{23} -e_{34})
\\
& \rightarrow 
\frac{1}{2}
\left(
\begin{array}{cccccc}
1 & 0 & i & -i & 0 & -1
\end{array}
\right)^T
\, ,
\end{align*}
and, similarly, $\ket{ \Sigma_2}=\left(
\begin{array}{cccccc}
1 & -i & i & 0 & -1
\end{array}
\right)^T/2$. Left-multiplying by $U$ we find their irreducible components,
\begin{equation}
\label{Sigma12irred}
\ket{\Sigma_1}_D=\frac{1}{2} 
\left( 
\begin{array}{cccccc} 
\big(1 & 0 & 0 & 0 & -1 \big) & \big( i\sqrt{2} \big)
\end{array}
\right)^T
\, ,
\qquad
\ket{\Sigma_2}_D=\frac{1}{2} 
\left( 
\begin{array}{cccccc} 
\big( 1 & 0 & 0 & 0 & -1 \big) & \big( -i\sqrt{2} \big)
\end{array}
\right)^T
\, ,
\end{equation}
where we used extra parentheses to visually define the spin-2 quintet and the  spin-0 singlet. Note that the spin-2 component, which gives rise to the principal constellation, is identical in the two planes, which, however, are distinguished by their differing spin-0 components. There are two reasons why our procedure for determining the spectator constellation is not applicable in this case: the principal constellation has nontrivial rotation symmetries (\eg, a rotation around $z$ by $\pi/2$) and the SEV of the spin-2 component vanishes. This is a good example of why our requirement of non-symmetric constellations is necessary for the definition of the spectator constellation: under the above mentioned symmetry rotation, the principal constellation of both planes goes back to itself, but the corresponding spin-2 state picks up a sign, resulting in the rotation exchanging the two planes --- this would contradict our result that the spectator constellation is invariant under rotations.
\end{myexample}
\begin{myexample}{Multiconstellation  for a $(2,2)$-plane}{Ms22p}
Proceeding as in the previous example, we find for the matrix $U$ transforming from the Pl\"ucker to the BD basis,
\begin{equation}
\label{Us2k2}
U
=
\left(
\begin{array}{cccccccccc}
 1 & 0 & 0 & 0 & 0 & 0 & 0 & 0 & 0 & 0
   \\
 0 & 1 & 0 & 0 & 0 & 0 & 0 & 0 & 0 & 0
   \\
 0 & 0 & \sqrt{\frac{3}{5}} & 0 &
   \sqrt{\frac{2}{5}} & 0 & 0 & 0 & 0
   & 0 \\
 0 & 0 & 0 & \frac{1}{\sqrt{5}} & 0 &
   \frac{2}{\sqrt{5}} & 0 & 0 & 0 & 0
   \\
 0 & 0 & 0 & 0 & 0 & 0 &
   \sqrt{\frac{3}{5}} &
   \sqrt{\frac{2}{5}} & 0 & 0 \\
 0 & 0 & 0 & 0 & 0 & 0 & 0 & 0 & 1 & 0
   \\
 0 & 0 & 0 & 0 & 0 & 0 & 0 & 0 & 0 & 1
   \\
 0 & 0 & \sqrt{\frac{2}{5}} & 0 &
   -\sqrt{\frac{3}{5}} & 0 & 0 & 0 & 0
   & 0 \\
 0 & 0 & 0 & \frac{2}{\sqrt{5}} & 0 &
   -\frac{1}{\sqrt{5}} & 0 & 0 & 0 & 0
   \\
 0 & 0 & 0 & 0 & 0 & 0 &
   \sqrt{\frac{2}{5}} &
   -\sqrt{\frac{3}{5}} & 0 & 0 \\
\end{array}
\right)
\, .
\end{equation}
Take, as example, the 2-plane $\ket{\boldsymbol{\Psi}}=v \wedge w$, where 
\begin{equation}
\label{vwdef}
v=
\left(
\begin{array}{ccccc}
1 & 0 & 1 & 0 & 0
\end{array}
\right)^T
\, ,
\qquad
w
=
\left(
\begin{array}{ccccc}
1 & 0 & 0 & 0 & 1
\end{array}
\right)^T
\, .
\end{equation}
Its normalized Pl\"ucker components are
\begin{equation}
\label{normPcomp}
\ket{\hat{\boldsymbol{\Psi}}}
=
\frac{1}{2}
\left(
\begin{array}{cccccccccc}
1 & 0 & 0 & 1 & -1 & 0 & 0 & 0 & 1 & 0
\end{array}
\right)^T
\, .
\end{equation}
Left multiplication by the above $U$ gives 
\begin{equation}
\label{PsiBDcomps}
\ket{\hat{\boldsymbol{\Psi}}}_D
=
\left(
\begin{array}{cc}
\ket{\psi^{(3)}}^T & \ket{\psi^{(1)}}^T
\end{array}
\right)^T
=
\frac{1}{\sqrt{20}}
\left(
\begin{array}{cccccccccc}
\big(
\sqrt{5} & 0 & -\sqrt{2} & 1 &  0 & \sqrt{5} & 0 
\big)
& 
\big(
\sqrt{3}  & 2 & 0
\big)
\end{array}
\right)^T
\, ,
\end{equation}
where the extra parentheses define visually the spin-3 and spin-1 multiplets. 
Each of $\ket{\psi^{(3)}}$, $\ket{\psi^{(1)}}$ has its own Majorana constellation. But the two states are not normalized to unity, and their constellations also miss the information about their phase. Both pieces of information are captured in the spectator spin-1/2 state $Z=(z_3,z_1)$, which we now determine. 

The SEV for $\ket{\psi^{(3)}}$  is $\vec{S}^{(3)}=(-\sqrt{\frac{3}{50}},0,\frac{7}{20})$, with polar coordinates 
$(\theta^{(3)},\phi^{(3)})=(\arctan \frac{20\sqrt{3}}{7\sqrt{50}},\pi)$. We compute the rotated state $\ket{\psi^{(3)}_1}$, and expand the corresponding density matrix in polarization tensors  to find
\begin{equation}
\rho^{(3)}_1 \rightarrow \left(
\left(
\frac{13}{20\sqrt{7}}
\right),
\left(
0,\frac{\sqrt{\frac{73}{7}}}{40}, 0 
\right),
\left(
\frac{31}{730\sqrt{14}}, -\frac{29}{146\sqrt{21}},\frac{241\sqrt{\frac{3}{7}}}{2920},\frac{29}{146 \sqrt{21}},
\frac{31}{730\sqrt{14}}
\right),
\ldots
\right)
\, .
\end{equation}
Note that the spin-1 component in this expansion is of the form $\{ 0,\lambda,0\}$, with $\lambda>0$, as a result of the SEV of $\rho^{(3)}_1$ being along $z$.
The first nonzero component, with $m\neq 0$, is the 22-component, which is already real and positive, so the second rotation, around the $z$-axis, is the identity, and $\ket{\psi^{(3)}_2}=\ket{\psi^{(3)}_1}=\left( 0.258,0.581,\ldots\right)$, the last equality giving the components of $\ket{\psi^{(3)}_2}$ in the $S_z$ eigenbasis.
Because the first nonzero component is real and positive, we get $z^{(3)}=\sqrt{\braket{\psi^{(3)}}{\psi^{(3)}}}=\sqrt{13/20}$.

Proceeding analogously we find $z^{(1)}=i\sqrt{7/20}$, so that the spectator ``state'' is 
$Z=\left(\sqrt{13/20},i\sqrt{7/20}\right)$. 
A plot of the corresponding constellations appears in figure~\ref{fig:psi31StarsPlot}.
\begin{figure}[h]
\hspace{2ex}
\includegraphics[width=.3\linewidth]{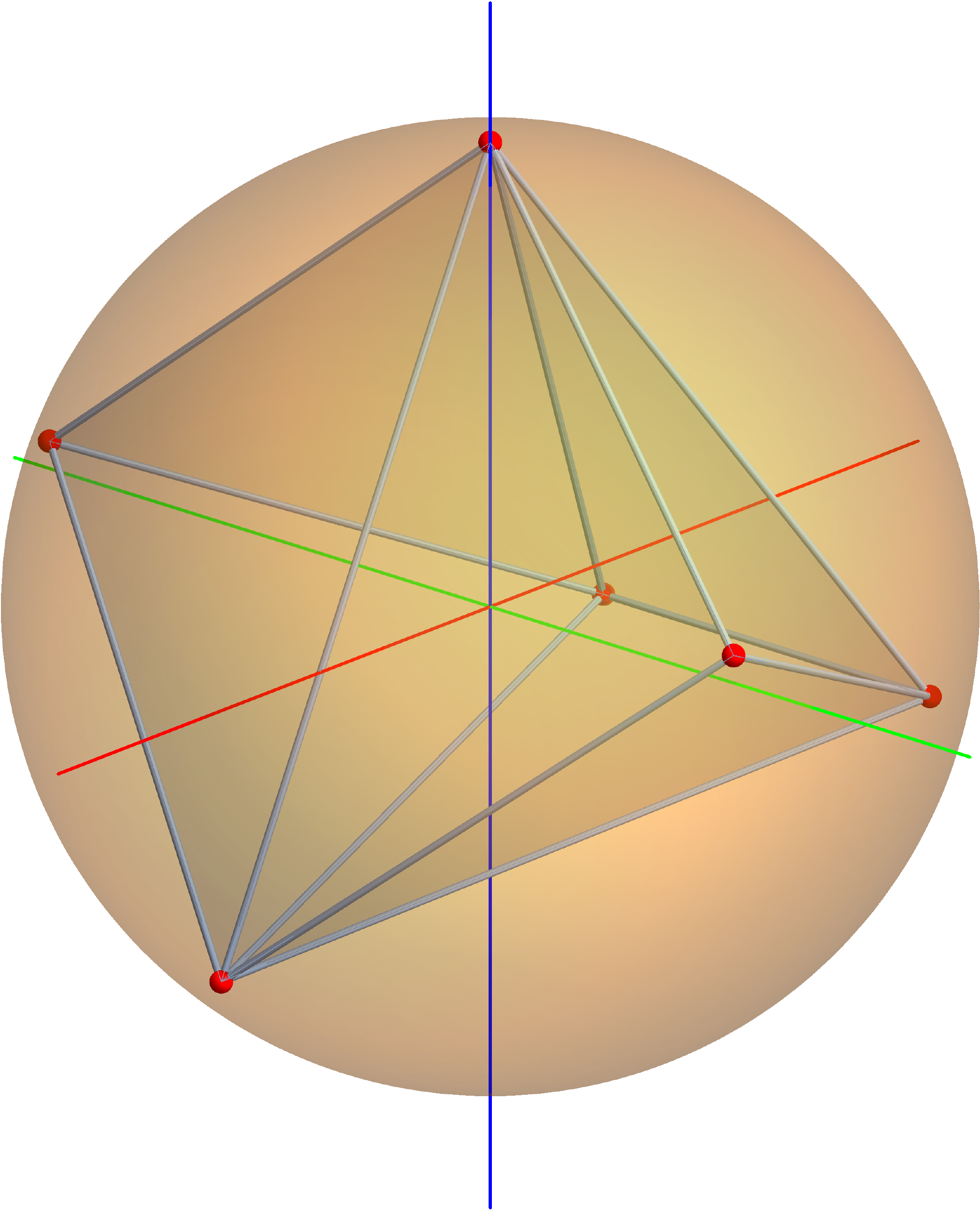}
\hspace{2ex}
\includegraphics[width=.3\linewidth]{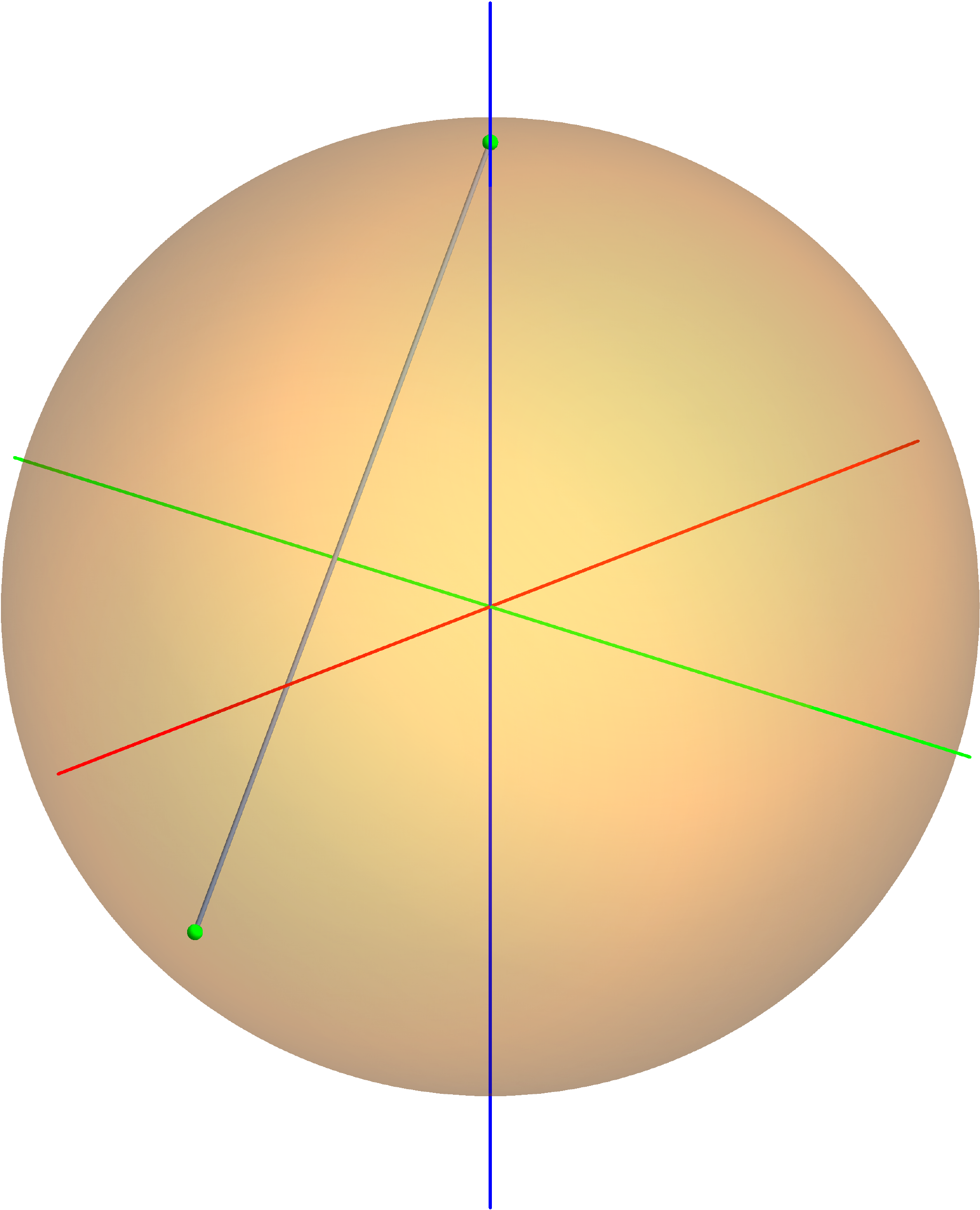}
\hspace{2ex}
\includegraphics[width=.3\linewidth]{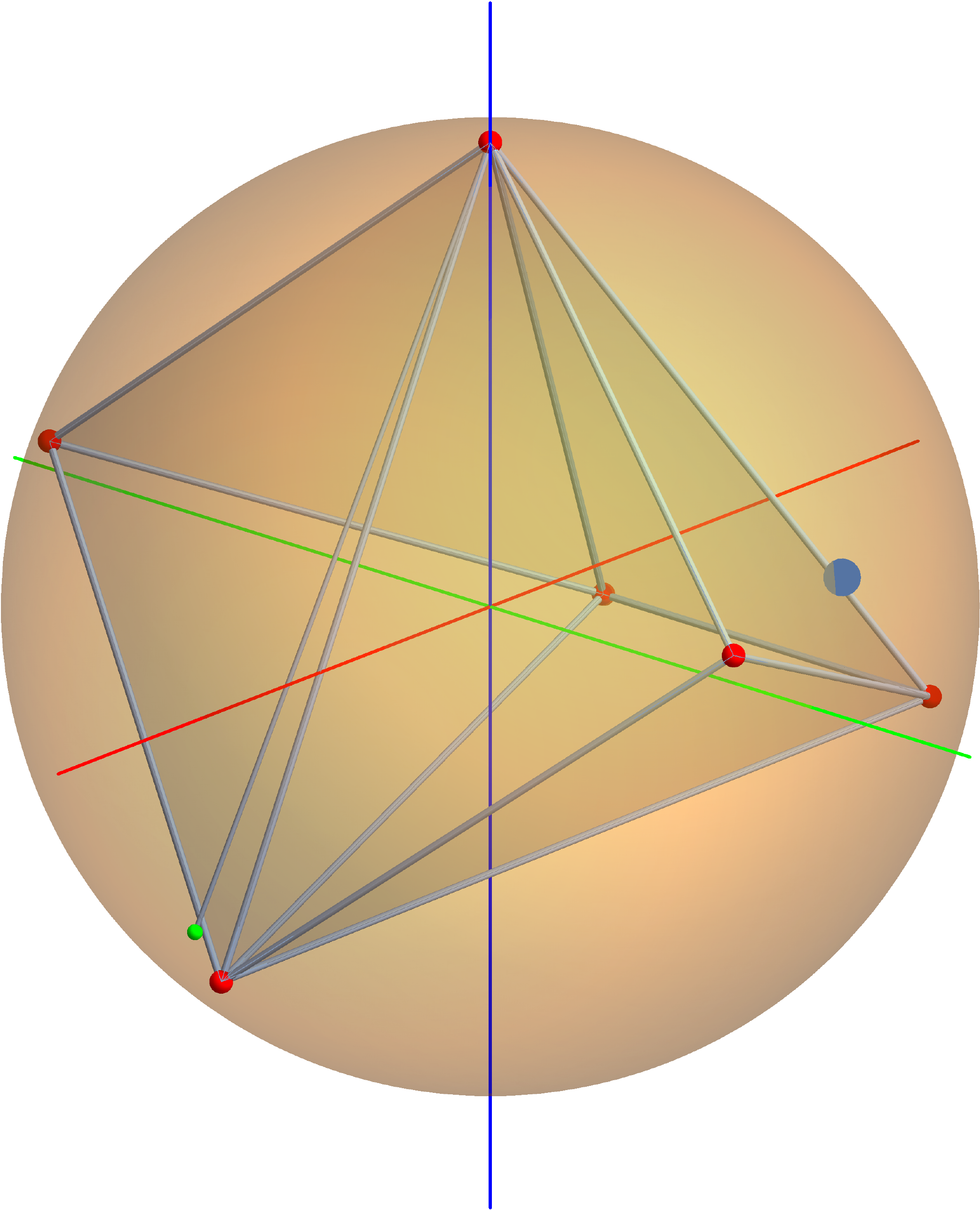}
\hspace{2ex}
\caption{%
Constellations $C_3$ (left), $C_1$ (middle), and both superimposed, including the spectator $\tilde{C}$ (blue dot) (right) of the $(2,2)$-plane in (\ref{vwdef}) --- note that $C_3$ and $C_1$  both have a star at the north pole.
}
\label{fig:psi31StarsPlot}
\end{figure}
\end{myexample}
\begin{myexample}{Choice of canonical basis in the $(\frac{7}{2},4)$ case}{cocb}
We start with the top plane $\ket{\frac{7}{2}}\wedge \ket{\frac{5}{2}}\wedge \ket{\frac{3}{2}} \wedge \ket{\frac{1}{2}}$, with $S_z=8$, and generate the entire $s=8$ multiplet by repeated application of $S_-$. At $S_z=6$, a second state appears (apart from the one belonging to the above multiplet), that generates, similarly, an $s=6$ multiplet. Then, at $S_z=4$, two new states appear (apart from the ones belonging to the previous two multiplets),
\begin{align}
\label{Pi1}
\ket{\boldsymbol{\Psi}_1} 
&\sim 
7\sqrt{3} \ket{\frac{7}{2}} \wedge \ket{\frac{5}{2}} \wedge \ket{-\frac{1}{2}} \wedge \ket{-\frac{3}{2}}
-14 \ket{\frac{7}{2}} \wedge \ket{\frac{3}{2}} \wedge \ket{\frac{1}{2}} \wedge \ket{-\frac{3}{2}}
+2\sqrt{105} \ket{\frac{5}{2}} \wedge \ket{\frac{3}{2}} \wedge \ket{\frac{1}{2}} \wedge \ket{-\frac{1}{2}}
\, ,
\\
\ket{\boldsymbol{\Psi}_2} 
&\sim 
2\sqrt{105} \ket{\frac{7}{2}} \wedge \ket{\frac{5}{2}} \wedge \ket{\frac{3}{2}} \wedge \ket{-\frac{7}{2}}
-14 \ket{\frac{7}{2}} \wedge \ket{\frac{5}{2}} \wedge \ket{\frac{1}{2}} \wedge \ket{-\frac{5}{2}}
+7\sqrt{3} \ket{\frac{7}{2}} \wedge \ket{\frac{5}{2}} \wedge \ket{-\frac{1}{2}} \wedge \ket{-\frac{3}{2}}
\, ,
\end{align}
which are degenerate in their expectation value of  $S_z=\sum_{r=1}^4 S_z^{[r]} \equiv Q^{(1)}$, where $S_z^{[r]}$ is the $S_z$ operator in the $r$-th wedge factor. One may similarly define the operator $Q^{(2)}=\sum_{r=1}^4 (S_z^{[r]})^2$, and distinguish the two states above according to their $Q^{(2)}$ expectation value. To this end, we consider the linear combination 
$\ket{\boldsymbol{\Psi}}=\alpha \ket{\boldsymbol{\Psi}_1}+\beta \ket{\boldsymbol{\Psi}_2}$, normalized to 1, and maximize $\bra{\boldsymbol{\Psi}}Q^{(2)} \ket{\boldsymbol{\Psi}}$ to find $\beta=(-109+4\sqrt{715})/21 \alpha$, which defines the first vector in the canonical basis we are after, while the second one is defined by orthogonality. When the degeneracy is greater than 2, additional, higher order, operators $Q^{(n)}$ may be used to lift it.
\end{myexample}

\noindent 

\section{Epilogue}
\label{Epilogue}
We have presented a generalization of Majorana's stellar representation of spin quantum states to the case of $(s,k)$-planes through the origin  in Hilbert space. Given such a plane, we first constructed an associated  Majorana-like  principal constellation, that rotates in physical space as the plane is rotated in Hilbert space. We then showed how to augment  this construction to a family of constellations, which, unlike the principal constellation, uniquely characterizes the plane. 

We mention here briefly possible applications of the above results. As alluded to already in the introduction, being able to visualize an $(s,k)$-plane simplifies the task of identifying its rotational symmetries. It is self-evident that the rotational symmetry group of any $(s,k)$-plane is a subgroup of the intersection of the symmetry groups of each of its constellations, since the invariance of the latter under a rotation is a necessary condition for the invariance of the plane. The condition, however, is not sufficient, because a constellation $C_i$ coming back to itself after a rotation implies that the corresponding state $\ket{\psi^{(i)}}$ might acquire a phase, and if the phases of the various $\ket{\psi}$'s that appear in~(\ref{PsiDdef}) are not equal,  the plane will not be invariant under the rotation --- we saw this happening in Example~\ref{ics322p}. On the other hand, if the above intersection of symmetry groups is trivial, the plane has no rotational symmetries, as is the case, in particular, if any of the $C_i$ has no such symmetries --- it is hard to see how to reach such a conclusion without the aid of our construction. Note that the converse problem is not trivial: our discussion above \emph{does not} clarify how to construct an $(s,k)$-plane with given rotational symmetries. It is true that one may choose freely the principal constellation, in particular endowing it with any desired symmetry, but the secondary constellations that complete the multiconstellation cannot be fixed at will --- rather, they can only take a discrete set of values, the determination of which, given the principal constellation, is rather non-trivial. We defer the elucidation of these matters to a future publication, currently in progress.

Another instance where our results might prove useful is the visualization of multifermionic spin states. Indeed, it is clear that the Grassmannian $\text{Gr}(k,n)$ may be thought of as the subspace of wedge-factorizable antisymmetric states of $k$ spin-$s$ particles, in which case the ambient Pl\"ucker space is just the full antisymmetric state space. There is nothing in our construction of multiconstellations that limits it to wedge-factorizable states though, so it can be used as well to visualize an arbitrary antisymmetric state, codifying, in particular, its rotational symmetries, as outlined above. Again, it is difficult to see how to efficiently unveil this information by other means. Note that totally antisymmetric $k$-partite states have long held a prominent role in atomic and molecular physics, where, when $k=\tilde{N}$, they are known as Slater-determinant states. These have also proved useful in quantum information processing~\cite{Saw.Huc.Kus:11,Sch.Cir.Kus.Lew.Los:01}, in which context they can be generated iteratively by a sequence of generalized XOR-gates and discrete Fourier transforms, and have also applications in, \eg, quantum 
cryptography~\cite{Jex.Alb.Bar.Del:03}.

There are various directions along which the above ideas may be further developed. A question we consider most pressing is the clarification of the physical meaning of the principal constellation. The analogous question for the Majorana constellation of a spin-$s$ state $\ket{\psi}$ has a concise, and conceptually appealing answer involving the $2s$ spin-1/2 particles whose symmetrization gives rise to $\ket{\psi}$. We feel that a similarly simple and appealing answer ought  to exist for the principal constellation. Another direction worth  exploring is the significance of coincident stars in a constellation. Such degenerate constellations clearly represent singular points in the Grassmannian, already in the original case of Majorana, and their mathematical description involves secant and  tangent varieties (see, \eg,~\cite{Zak:93,Hey:08,Hol.Luq.Thi:12,Chr.Guz.Ser:18}) --- we hope we will soon be able to report our progress on these matters. On the applications front, our first priority would be to develop possible ramifications for the program of holonomic quantum computation~\cite{Zan.Ras:99}. The Wilzcek-Zee effect, upon which this entire endeavor is based, considers a $k$-dimensional degenerate subspace of the Hilbert space that undergoes cyclic evolution, tracing a closed curve in the corresponding Grassmannian. The practical problem one faces at the outset with this requirement  is identifying the closure of the curve, as a particular  basis in the plane may not return to itself, even when the plane it spans does. Clearly, representing the plane by its multiconstellation solves this problem, and further simplifies it in the case the  time evolution of the plane in question corresponds to a sequence of rotations, as the latter may be applied directly to the multiconstellation. 
\section*{Acknowledgements}
\label{Ack}
The authors would like to acknowledge partial financial support from UNAM-DGAPA-PAPIIT project IG100316. ESE would also like to acknowledge financial support from the T@T fellowship of the Univesrity of T\"ubingen.

\end{document}